\numberwithin{equation}{section}
\newtheorem{dfn}[theorem]{Definition}
\newtheorem{rem}[theorem]{Remark}
\newtheorem{prp}[theorem]{Proposition}
\newtheorem{thm}[theorem]{Theorem}
\newtheorem{cor}[theorem]{Corollary}
\newcommand\orcidicon[1]{\href{https://orcid.org/#1}{\protect\includegraphics[width = .25cm]{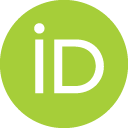}}}
\newcommand{\define}{\coloneqq}
\newcommand{\algorithmiccontinue}{\textbf{continue}}
\journalname{}
\begin{document}
\title{A deterministic matching method 
		for exact matchings to compare the 
		outcome of different interventions		
\thanks{
		C.~Kirches acknowledges funding by Deutsche Forschungsgemeinschaft through Priority Programme 1962 ``Non-smooth and Complementarity-based Distributed Parameter Systems: Simulation and Hierarchical Optimization''.
		C.~Kirches was supported by the German Federal Ministry of Education and Research, grants n\textsuperscript{o} 61210304-ODINE, 05M17MBA-MoPhaPro and 05M18MBA-MOReNet.
	}
}

\titlerunning{Deterministic statistical exact matching}

\author{Felix Bestehorn$^{\textbf{1}}$\orcidicon{0000-0001-6339-2193} \and
	Maike Bestehorn$^{\textbf{2}}$	\and
	Christian Kirches$^{\textbf{1}}$\orcidicon{0000-0002-3441-8822}  %etc.
}

\authorrunning{F. Bestehorn et al.} % if too long for running head

\institute{\href{mailto: f.bestehorn@tu-bs.de}{\Envelope~}Felix Bestehorn 
			\and
			Christian Kirches\\
			\email{\{f.bestehorn, c.kirches,\}@tu-bs.de}\\
			\at
			~$^{\textbf{1}}$ {Institute for Mathematical Optimization, Technische Universit\"at Braunschweig, Braunschweig, Germany}
			\and 
			Maike Bestehorn\\
			\email{maike.bestehorn@t-online.de}\\
			\at
			~$^\textbf{2}${Sch\"aftlarn, Germany}
	}

\date{Received: date / Accepted: date}
% The correct dates will be entered by the editor

\maketitle

\begin{abstract}
Statistical matching methods are widely used in the social and health sciences to 
estimate causal effects using observational data. 
Often the objective is to find comparable groups with similar covariate distributions in a dataset,
with the aim to reduce bias in a random experiment.
We aim to develop a foundation for deterministic methods which provide results 
with low bias, while retaining interpretability. 
The proposed method matches on the covariates and 
calculates all possible maximal exact matches
for a given dataset without adding numerical errors.
Notable advantages of our method over existing matching algorithms are 
that all available information for exact matches is used, no additional bias is introduced, 
it can be combined with other matching methods for inexact matching to reduce pruning
and that the result is calculated in a fast and deterministic way. For a given dataset the result
is therefore provably unique for exact matches in the mathematical sense. We provide proofs, instructions for 
implementation as well as a numerical example calculated for comparison on a complete survey.  
\keywords{Statistical exact matching; evaluation of observational studies; matched sampling; weighted matching}
%\keywords{First keyword \and Second keyword \and More}
\PACS{C15 \and C18} %maybe C12 isntead
% \subclass{MSC code1 \and MSC code2 \and more}
\subclass{62-07 \and 91B68}
\MCS{62D20 \and 91B68}
\end{abstract}
\section{Introduction}
\label{sec:intro}

Statistical matching (SM) is widely used to reduce the effect of confounding
~\citep{Rubin1973,Anderson1980,Kupper1981} when estimating the causal effects of two different paths of action 
in an observational study. Such a study consists e.g. of a dataset containing two therapy groups $A$ and $B$, 
which in turn contain patients 
$a_1,\,\ldots,\,a_{\vert A \vert},\,b_1,\,\ldots,\,b_{\vert B \vert}$. 
Every patient $p\in D$ has a $s$-dimensional covariate vector $cv(p)$, 
describing the patients condition, and 
an observed value $\mathfrak{o}(p)$, describing the result of the therapy,
for examples see~\citep{Ray2012,Zhang2015,Gozalo2015,Zhang2016LeftIM,Cho2017,Burden2017,McEvoy2016,Schermerhorn2008,Lee2017,Capucci2017,Tranchart2016,Zangbar2016,Dou2017,Fukami2017,McDonald2017,Lai2016,Abidov2005,Adams2017,Kishimoto2017,Kong2017,Chen2016,Seung2008,Shaw2008,Liu2016,Svanstroem2013,Salati2017}. 

The goal in regard to SM would then be to find a matching such that
patients which are similar according to a chosen similarity measure, 
e.g.~Mahalanobis distance or propensity score, 
are compared with each other
and a reliable conclusion with regards to the preferable therapy under the circumstances
defined by the underlying model and hypothesis can be drawn from the matching, while bias potentially
introduced through a comparison of dissimilar patients is reduced. 
If the matching is also maximal in the 
sense that all possibly matchable patients, i.e. patients which are similar to other 
patients, are matched, the matching is called a maximal matching. 

Regrettably, minimizing bias is not 
the only key aspect to be considered: maximal matching can lead to the pruning of patients, thus possibly ignoring relevant information contained in the dataset. As these matchings are usually not unique, there can be a high variance in information in between matchings and thus conclusions drawn from them 
can potentially vary to a high degree. 
Hence one needs to find a matching optimized in regards to bias and pruning.
As the underlying distribution of the dataset and the influence of the therapy is
unknown, finding the optimal maximal patient-to-patient matching is difficult.
Based on the foundation 
laid by~\cite{RubinRosenbaum1983} for propensity score matching~(PSM), many 
different methods have been proposed to deal with this problem, e.g. nearest neighbour matching \citep{Rubin1972}, 
stratification matching on propensity scores~\citep{Anderson1980},
caliper matching~\citep{Stuart2010}, optimal matching~\citep{Rosenbaum1989}, coarsened matching~\citep{Iacus2012} or full matching~\citep{Hansen2004,Hansen2012}.
A comprehensive overview can be found for example in the article from~\cite{Stuart2010}. 

The aforementioned methods inspect either one or several matchings and try to cope with the problem of 
not being able to calculate all possible maximal matchings through numerical or stochastic methods (\citep{Stuart2010})
and have some limitations which have been investigated lately~\citep{King2019,Austin2011,Pearl2000}. 
Therefore, researchers find themselves sometimes in the difficult position where different matchings,
while being statistically sound, can suggest different conclusions.

Due to the aforementioned reasons we investigate a slightly different approach in this article.
After showing
that considering only one or several patient-to-patient matchings over the whole dataset
is inadequate as exponentially many different patient-to-patient matchings exist, implying that high variance 
in the deduced conclusions is possible,
we proceed to propose a method that matches clusters of patients.
The goal is to develop a method which uses all information contained in the dataset and considers all possible maximal 
matchings of a dataset in accordance to a chosen similarity, thus leading to low confounding and low variance. 
The proposed method has the desirable property of calculating a matching in accordance to the expectancy
value of all possible maximal exact matchings in the dataset, while being fast and deterministic and therefore can
support decision making processes 
as no additional errors are included during the matching process.

A short note on terminology: We use the terms 
terms therapy group, patient, covariate vector and observed value for clarity and simplicity 
of presentation and that they can be substituted
for any type of group, member of said group, properties of the member and observed result for the member.

\subsection{Contribution}
\label{sec:contr}
We investigate the quantity of possible exact matchings and show that, 
even under the restriction that only exact matches are considered, many possible 
matchings exists. For this reason we
propose a different approach, which uses all available information in a given dataset for an exact matching 
and show that the proposed method has desirable properties for SM. 
We confirm our theoretical contributions by evaluating a complete survey 
and comparing the proposed method with the de-facto standard for SM in such applications,
namely \textit{propensity score matching} (PSM).
\subsection{Structure of the Remainder}
\label{sec:struct}
We proceed to show that even in an exact matching context multiple possible matchings exist and propose
an algorithm to cope with this problem and prove desirable properties of the algorithm. 
In Section~\ref{sec:num_example}, we describe and confirm the findings made in the previous section 
based on a comparison of the proposed algorithm with an established method for SM on a complete survey. 
We conclude with Section~\ref{sec:conclusion}, which is used 
to summarize our results as well as the algorithm's benefits and drawbacks and to give an outlook 
for potential further development. 

\section{Deterministic exact matching}
\label{sec:exact:matching}

As stated in the introduction (Section~\ref{sec:intro}), the goal of SM in a general setting
is to match as many patients between two groups with regards
to a chosen similarity or distance measure as possible. 
One can immediately distinguish two cases:

\begin{itemize}
	\item Exact matching~\citep{Stuart2010,Iacus2012}: Only members of different sets with equal covariate vectors are matched.
	\item $\delta$-matching or caliper/inexact matching~\citep{Stuart2010}: Members of different sets can be matched if their 
	difference with regard to a chosen similarity measure is smaller than $\delta$.
\end{itemize}

Thus exact matching is a special case of $\delta$-matching for $\delta = 0$
and one can define potentially matchable patients in the following way.

\begin{dfn}[Matchable Patients]
	\label{def:matching_patients}
	Let $p$ and $q$ be two patients from different therapy \linebreak 
	groups and let $d(\,\cdot\,,\,\cdot)$ 
	be an arbitrary similarity measure. Then $p$ and $q$ are matchable patients for a $\delta$-matching,
	if $d(p,\,q) \le \delta$.
\end{dfn}

We will only consider exact matches, 
$\delta = 0$, for the remainder of this manuscript.
We refer to Section~\ref{sec:conclusion} for a discussion of a potential
extension of this method to $\delta$-matchings with $\delta > 0$.

We first formalize an 
observation made during statistical matching, which states that
the number of all different possible exact matches may be inhibitively large, so that not all
matchings can be computed in acceptable time. 
Based on this we proceed by showing that it is possible to calculate a 
matching on a dataset in accordance with the expectancy of the observed values, 
if one uses clusters instead of patients and show that the
proposed algorithm has additional desirable properties for statistical matching.

\subsection{Preparations}
\label{subsec:prep}

The general idea of the proposed method is to cluster patients with the same covariate vector 
for each therapy group and generate a matching between both therapy groups for the constructed clusters.

Clustering of patients $p$ and $q$ on their covariate vectors requires a similarity measure. 
In the remainder, 
we will use the $L_1$ distance measure (also known as Manhattan metric) 
\begin{equation}
\label{eq:manhattan}
d(p,\,q) \define \sum_{i = 1}^{s} \vert cv_i(p) - cv_i(q) \vert,
\end{equation}
but other distance measures (possibly defined through similarity measures) are applicable as well as long 
as they can be calculated for every pair of patients.

\begin{rem}
	Note that two patients $p$ and $q$ have identical covariate vectors if and only if $d(p,\,q) = 0$. Thus 
	for exact matching $d(p,\,q) = 0$ is required for all matchable patients $p$ and $q$. 
\end{rem}

Together with a distance measure we can define the notion of clusters.

\begin{dfn}[Cluster of patients]\label{def:cluster}
	In an SM context, a cluster of patients from one therapy group $H$ is a non-empty set $C_{H}$ of patients with properties
	\begin{enumerate}
		\item $d(p,\,q) = 0 \,\,\forall p,\,q \in C_{H}$. \label{item:def_cluster_1}
		\item For $p\in C_H$ it holds that $\nexists q \in H$ such that $q\notin C_H$ and $d(p,\,q) = 0$. \label{item:def_cluster_2}
		\item If $p\in C_H$, then the assigned covariate vector of $C_H$ is $cv(p)$. \label{item:def_cluster_3}
	\end{enumerate}
\end{dfn}

Hence, clusters have the following properties:

\begin{prp}\label{prop:cluster}
	Let $H$ be a therapy group in an SM context, then the following holds for clusters in this therapy group:
	\begin{enumerate}
		\item Every patient in $H$ belongs to exactly one cluster.
		\item Every cluster can have exactly one covariate vector assigned to it.
		\item Any two clusters in $H$ have different assigned covariate vectors.
	\end{enumerate}
\end{prp}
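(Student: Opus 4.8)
The plan is to reduce all three assertions to the elementary fact recorded in the Remark just before Definition~\ref{def:cluster}: for patients $p,q$ of the same therapy group, $d(p,q) = 0$ if and only if $cv(p) = cv(q)$. Consequently the relation $p \sim q :\Leftrightarrow d(p,q) = 0$ on $H$ coincides with ``$cv(p)=cv(q)$'' and is therefore an equivalence relation --- reflexive, symmetric, and in particular \emph{transitive} --- without having to assume that $d$ is a genuine metric. The core step will then be to identify the clusters of Definition~\ref{def:cluster} with the equivalence classes of $\sim$ on $H$: property~\ref{item:def_cluster_1} says a cluster consists of pairwise $\sim$-related patients, hence is contained in a single class, while property~\ref{item:def_cluster_2} says a cluster contains every patient of $H$ related to one of its members, hence exhausts that class; conversely each $\sim$-class is non-empty and satisfies~\ref{item:def_cluster_1}--\ref{item:def_cluster_2}, with~\ref{item:def_cluster_3} being merely the naming convention and imposing no further constraint.

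Given this identification, the three claims are the standard facts about equivalence classes, which I would nonetheless spell out directly from~\ref{item:def_cluster_1}--\ref{item:def_cluster_2} to keep the argument self-contained. For the first assertion, every $p \in H$ lies in the set $C := \{\, q \in H : d(p,q) = 0 \,\}$, which is a cluster, so existence holds; for uniqueness, if $p \in C_H \cap C_H'$, then any $q \in C_H$ satisfies $d(p,q) = 0$ by~\ref{item:def_cluster_1}, hence $q \in C_H'$ by~\ref{item:def_cluster_2} applied to $C_H'$, giving $C_H \subseteq C_H'$ and, symmetrically, $C_H = C_H'$. For the second assertion, pick any $p \in C_H$ (possible since clusters are non-empty) and assign $cv(p)$; if $q \in C_H$ is another representative, then $d(p,q)=0$ forces $cv(p)=cv(q)$ by the Remark, so the assignment is independent of the chosen representative, i.e.\ unique. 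For the third assertion, suppose distinct clusters $C_H \ne C_H'$ carried the same assigned vector $v$; choosing $p \in C_H$, $q \in C_H'$ with $cv(p) = v = cv(q)$, the Remark yields $d(p,q) = 0$, so~\ref{item:def_cluster_2} applied to $C_H$ forces $q \in C_H$, and then $q$ would lie in the two distinct clusters $C_H$ and $C_H'$, contradicting the first assertion.

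The only point that genuinely needs care --- and the one I would flag explicitly --- is the reduction in the first paragraph: one should not invoke the triangle inequality of $d$, since the paper permits general similarity-induced distances, but rather the Remark, which already delivers transitivity of $\sim$. Everything after that is routine set-theoretic bookkeeping, so I do not expect a substantial obstacle.
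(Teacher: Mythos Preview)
Your proposal is correct and follows essentially the same approach as the paper: both arguments work directly from properties~\ref{item:def_cluster_1} and~\ref{item:def_cluster_2} of Definition~\ref{def:cluster} together with the Remark that $d(p,q)=0$ iff $cv(p)=cv(q)$, proving each of the three items by the same contradictions. Your version is slightly more careful in two places---you explicitly construct the cluster $\{q\in H: d(p,q)=0\}$ for existence (the paper just cites $d(p,p)=0$), and you flag that transitivity of $\sim$ comes from the Remark rather than from any metric property of $d$ (the paper uses this step without comment)---but these are refinements of the same argument, not a different route.
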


\begin{proof}
	We prove every characteristic individually:
	\begin{enumerate}
		\item As $d(p,\,p) = 0,\,\forall p \in H$, all patients belong to at least one cluster. Thus it remains to show that there exists no patient $p\in H$ belonging to two different clusters $C_{H,1}$ and $C_{H,2}$. Assume that $p\in C_{H,1}\cap C_{H,2}$ and let $q_1 \in C_{H,1}$ and $q_2 \in C_{H,2}$ be two patients in $C_{H,1}$ and $C_{H,2}$ respectively. Then it holds by Definition~\ref{def:cluster}.\ref{item:def_cluster_1} that ${d(p,\,q_1) = 0 = d(p,\,q_2)}$ and therefore $d(q_1,\,q_2)=0$. This is a contradiction to Definition~\ref{def:cluster}.\ref{item:def_cluster_2}. and therefore every patient belongs to exactly one cluster.
		\item As clusters are non-empty sets of patients every cluster has at least one covariate vector assigned to it. Therefore assume that cluster $C$ has two assigned covariate vectors $v_1$ and $v_2$ differing in at least one entry. Then by Definition~\ref{def:cluster}\ref{item:def_cluster_3} it holds that there exists patients $p,\,q \in C$ such that $v_1 = cv(p)$ and $v_2 = cv(q)$. As $v_1 \neq v_2$ holds by assumption it follows that $d(p,\,q) \neq 0$, contradicting Definition~$\ref{def:cluster}.\ref{item:def_cluster_1}$ as $p,\,q \in C$.
		\item Assume that different clusters $C_{H,1}$ and $C_{H,2}$ have the same assigned covariate vector. This implies that\linebreak
		$d(p,\,q) = 0,\,\forall p\in C_{H,1},\,q\in C_{H,2}$ and is a contradiction to Definition~\ref{def:cluster}.\ref{item:def_cluster_2}.
	\end{enumerate}
\end{proof}

Because of Proposition~\ref{prop:cluster}, clusters can be assigned unique covariate vectors. Thus the similarity of clusters $C_A$ and $C_B$ -- for therapy groups $A$ and $B$ respectively -- can be denoted similarly to patients as $d(C_A,\,C_B)$. This leads to the following observation in regards to exact matching:

For an arbitrary dataset $D = (A,\,B)$ and a cluster $C_H$ belonging to a therapy group ${H \in \{A,\,B\}}$ only the following situations can occur: 

\begin{enumerate}
	\item For $C_A$ there exists one cluster $C_B$ with $d(C_A,\,C_B) \equiv 0$. \label{item:situation_1}
	\item For $C_A$ there exists no cluster $C_B$ with $d(C_A,\,C_B) \equiv 0$.
	\item For $C_B$ there exists no cluster $C_A$ with $d(C_A,\,C_B) \equiv 0$.
\end{enumerate}

Only situation~(\ref{item:situation_1}) is relevant for exact matching of single patients or clusters, as exact matching can only occur for clusters with a corresponding cluster in the opposite therapy group. Furthermore if situation~(\ref{item:situation_1}) occurs, then the match is unique in regards to the clusters.

\begin{prp}\label{prop:cluster_equivalence}
	Let $C_A$ and $C_B$ be clusters from different therapy groups, then \\${d(C_A,\,C_B) \equiv 0}$ holds iff the two clusters have the same assigned covariate vector.
\end{prp}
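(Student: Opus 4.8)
The plan is to unwind the definition of $d(C_A,\,C_B)$ in terms of representative patients and then invoke positive-definiteness of the $L_1$ distance (equivalently, the Remark following~\eqref{eq:manhattan}). Recall that by Proposition~\ref{prop:cluster} each cluster carries a single, well-defined assigned covariate vector, so the notation $d(C_A,\,C_B)$ makes sense; I read $d(C_A,\,C_B) \equiv 0$ as the statement ``$d(p,\,q) = 0$ for every $p \in C_A$ and every $q \in C_B$'', matching the usage in the paragraph preceding the proposition.

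For the direction ``equal assigned covariate vectors $\Rightarrow$ $d(C_A,\,C_B) \equiv 0$'': let $v$ denote the common assigned covariate vector. Pick arbitrary $p \in C_A$ and $q \in C_B$ (both clusters are non-empty by Definition~\ref{def:cluster}). By Definition~\ref{def:cluster}.\ref{item:def_cluster_3} the assigned covariate vector of $C_A$ is $cv(p)$ and that of $C_B$ is $cv(q)$, hence $cv(p) = v = cv(q)$. Substituting into~\eqref{eq:manhattan} gives $d(p,\,q) = 0$, and since $p$ and $q$ were arbitrary this yields $d(C_A,\,C_B) \equiv 0$.

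For the converse, assume $d(C_A,\,C_B) \equiv 0$. Choose any $p \in C_A$ and $q \in C_B$; then $d(p,\,q) = 0$, so by the Remark $p$ and $q$ have identical covariate vectors. Again by Definition~\ref{def:cluster}.\ref{item:def_cluster_3}, $cv(p)$ is the assigned covariate vector of $C_A$ and $cv(q)$ that of $C_B$, and $cv(p) = cv(q)$, so the two clusters share the same assigned covariate vector.

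The argument is essentially a matter of chasing definitions; the only point requiring (minimal) care is that both clusters are non-empty, so that representative patients exist and the covariate vectors picked this way are precisely the well-defined assigned covariate vectors furnished by Proposition~\ref{prop:cluster}. I do not expect any substantive obstacle here.
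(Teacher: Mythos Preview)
Your proof is correct and follows essentially the same route as the paper: both arguments reduce the claim to positive-definiteness of the $L_1$ distance on the (unique) assigned covariate vectors, and the biconditional is obtained by reading the equalities in both directions. The only cosmetic difference is that the paper treats $d(C_A,\,C_B)$ directly as the $L_1$ distance between the clusters' assigned covariate vectors $cv(C_A)$ and $cv(C_B)$ (cf.\ the sentence preceding the proposition), rather than quantifying over representatives $p\in C_A$, $q\in C_B$; since every patient in a cluster shares the assigned covariate vector, the two readings coincide and your extra unpacking does no harm.
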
 

\begin{proof}
	Let $C_A$ and $C_B$ be clusters from different therapy groups and ${d(C_A,\,C_B) \equiv 0}$. As every cluster has exactly one assigned covariate vector it remains to show ${cv(C_A) \equiv cv(C_B)}$: 
	\begin{equation}
	\label{eq:cluster_eq}
	d(C_A,\,C_B) \equiv 0 \Leftrightarrow \sum_{i=1}^{s} \vert cv_i(C_A)-cv_i(C_B)\vert \equiv 0 \Leftrightarrow cv_i(C_A) \equiv cv_i(C_B),\,\forall 1\leq i \leq s.
	\end{equation}
	Thus both clusters have the same assigned covariate vector. The reverse direction follows as all implications in equation~\eqref{eq:cluster_eq} are given through equivalence.
\end{proof}

Motivated by the previous proposition and by the definition of matchable patients (Definition~\ref{def:matching_patients}) we can define exact matchable clusters.
\begin{dfn}[Matchable Cluster]
	\label{def:matching_cluster}
	Two clusters $C_A$ and $C_B$ of $A$ and $B$ respectively are exact matchable clusters \linebreak
	iff $d(C_A,\,C_B) \equiv 0$.
\end{dfn}

Equipped with Definitions~\ref{def:cluster} and~\ref{def:matching_cluster} as well as
Propositions~\ref{prop:cluster} and~\ref{prop:cluster_equivalence} we can identify 
the exact cardinality $n$ of exact matchable clusters. 

Additionally,
we can calculate the number of all different possible matchings between two exact matchable clusters 
(Proposition~\ref{prop:total_number}) and for whole datasets (Proposition~\ref{prop:group_number}).  
Note that the necessity to calculate all possible matchings
arises as observed results between different maximal matchings can show a large variation,
see Section~\ref{sec:num_example}, even if 
the dataset is large and matches are exact (Table~\ref{tab:results}).

\begin{prp}
	\label{prop:total_number}
	Let $C_A$ and $C_B$ be exact matchable clusters of $A$ and $B$ with $\vert C_A \vert = \mathfrak{a}$ and $\vert C_B \vert = \mathfrak{b}$ respectively, then
	\begin{enumerate}
		\item A set of exact matches $M$ between $A$ and $B$ with a maximal number of matches includes $\min(\mathfrak{a},\,\mathfrak{b})$ exact matches from $C_A$ and $C_B$.\label{item:total_number_1}
		\item For $C_A$ and $C_B$ there exists $\binom{\max(\mathfrak{a},\,\mathfrak{b})}{\min(\mathfrak{a},\,\mathfrak{b})}$ 
		sets with $\min(\mathfrak{a},\,\mathfrak{b})$ exact matches, where $\binom{\max(\mathfrak{a},\,\mathfrak{b})}{\min(\mathfrak{a},\,\mathfrak{b})}$ denotes the binomial coefficient of
		$\max(\mathfrak{a},\,\mathfrak{b})$ and $\min(\mathfrak{a},\,\mathfrak{b})$.\label{item:total_number_2}
	\end{enumerate}
\end{prp}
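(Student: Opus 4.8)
The plan is to treat the two parts separately, both resting on a single structural fact that follows from Proposition~\ref{prop:cluster_equivalence} together with the maximality condition in Definition~\ref{def:cluster}.\ref{item:def_cluster_2}: a patient $p\in C_A$ can occur in an exact match only with a patient of the opposite therapy group whose covariate vector equals $cv(C_A)$, and the set of all such patients in $B$ is exactly $C_B$ (and dually for $C_B$). Consequently, the matches of \emph{any} exact matching that involve a patient of $C_A$ are precisely those matches that lie inside $C_A\times C_B$.

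For part~\ref{item:total_number_1}, I would first observe that the matches of $M$ inside $C_A\times C_B$ form a set of pairwise disjoint pairs, each using one patient of the $\mathfrak a$-element set $C_A$ and one of the $\mathfrak b$-element set $C_B$; hence there are at most $\min(\mathfrak a,\mathfrak b)$ of them. Suppose there were strictly fewer, say $k<\min(\mathfrak a,\mathfrak b)$. By the structural fact above, exactly $k$ patients of $C_A$ and exactly $k$ patients of $C_B$ are matched in $M$, so since $k<\min(\mathfrak a,\mathfrak b)\le\mathfrak a$ and $k<\min(\mathfrak a,\mathfrak b)\le\mathfrak b$ there remain an unmatched patient $p\in C_A$ and an unmatched patient $q\in C_B$. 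As $d(p,q)=0$ by Definition~\ref{def:matching_cluster}, the pair $\{p,q\}$ may be added to $M$, contradicting the maximality of the number of matches in $M$. Therefore $M$ contains exactly $\min(\mathfrak a,\mathfrak b)$ matches from $C_A$ and $C_B$.

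For part~\ref{item:total_number_2}, assume without loss of generality that $\mathfrak a=\min(\mathfrak a,\mathfrak b)\le\mathfrak b=\max(\mathfrak a,\mathfrak b)$, the other case being symmetric. By part~\ref{item:total_number_1}, every maximal set of exact matches between the two clusters pairs all $\mathfrak a$ patients of $C_A$ with $\mathfrak a$ distinct patients of $C_B$, so it is obtained by selecting the $\mathfrak a$-element subset of $C_B$ that participates (equivalently, the $\mathfrak b-\mathfrak a$ patients of $C_B$ that are pruned) and then assigning it to $C_A$. Since by Proposition~\ref{prop:cluster} all patients of a cluster carry the same covariate vector, the particular assignment within the clusters is immaterial, so each such set is determined exactly by the chosen subset of $C_B$; conversely each choice of subset yields a distinct set of $\min(\mathfrak a,\mathfrak b)$ exact matches. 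This gives $\binom{\mathfrak b}{\mathfrak a}=\binom{\max(\mathfrak a,\mathfrak b)}{\min(\mathfrak a,\mathfrak b)}$ such sets.

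The step I expect to need the most care is pinning down, in part~\ref{item:total_number_2}, what ``a set with $\min(\mathfrak a,\mathfrak b)$ exact matches'' is meant to count: one must argue that the internal pairing between the two clusters does not produce genuinely distinct matchings for the purpose at hand, so that the count is governed solely by which patients of the larger cluster are retained. The remaining bookkeeping --- disjointness, the pigeonhole bound, and the reduction of a globally maximal matching $M$ to its restriction to $C_A\times C_B$ via cluster maximality --- is routine.
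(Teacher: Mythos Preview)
Your proposal is correct and follows essentially the same line as the paper: a maximality-by-contradiction argument for part~\ref{item:total_number_1}, and for part~\ref{item:total_number_2} a WLOG reduction to $\mathfrak a\le\mathfrak b$ followed by the observation that a maximal matching is determined by which $\mathfrak a$ patients of $C_B$ are selected. Your flagged concern about whether the internal pairing should be counted is well taken; the paper's proof is equally terse on this point and implicitly adopts the same convention you make explicit.
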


\begin{proof}
	We can w.l.o.g. assume that $\mathfrak{a} \le \mathfrak{b}$, as one can simply substitute $\mathfrak{a}$ for $\mathfrak{b}$ 
	and $\mathfrak{b}$ for $\mathfrak{a}$ in the other case. 
	Thus $\min(\mathfrak{a},\,\mathfrak{b}) = \mathfrak{a}$ and as 
	$C_A$ and $C_B$ are exact matchable clusters, $\mathfrak{a}$ matchable pairs
	$(p,\,q)$ with $p\in C_A,\,q\in C_B$ exist. Due to Proposition~\ref{prop:cluster_equivalence}, 
	this is the maximal matchable number of patients between $C_A$ and $C_B$
	as $M$ was assumed to be maximal.
	Assume now that only $a< \mathfrak{a}$ of these matchable pairs 
	are contained in $M$. This contradicts the maximality of $M$ as the remaining matchable pairs 
	could be added. This proves~(\ref{item:total_number_1}). 
	
	As $\mathfrak{a} \le \mathfrak{b}$, it holds that every element from $C_A$ is matched to elements of $C_B$
	and these matches constitute $\mathfrak{a}$ many tuples of matched elements, in short an $\mathfrak{a}$-tuple. 
	As $\vert C_B\vert = \mathfrak{b}$, one can construct 
	$\binom{\mathfrak{b}}{\mathfrak{a}}$ different matchings by selecting different elements of $C_B$ for 
	the matches.
	
	Therefore for two matching clusters $C_A$ and 
	$C_B$ exist $\binom{\max(\mathfrak{a},\,\mathfrak{b})}{\min(\mathfrak{a},\,\mathfrak{b})}$ 
	sets with $\min(\mathfrak{a},\,\mathfrak{b})$ exact matches in general.	
\end{proof}

\begin{prp}
	\label{prop:group_number}
	Let $n$ be the number of exact matching clusters in $A$ and $B$ and let 
	$C_{A,j}$ and $C_{B,j}$ with $1\le j \le n$ be exact matching clusters of $A$ and $B$ with 
	$\vert C_{A,\,j} \vert = \mathfrak{a}_j$ and $\vert C_{B,\,j} \vert = \mathfrak{b}_j$ respectively. Then the number of different 
	exact matchings that exist on the whole dataset is
	\begin{equation}
	\label{eq:group_number}
	\prod_{j=1}^{n} \binom{\max(\mathfrak{a}_j,\,\mathfrak{b}_j)}{\min(\mathfrak{a}_j,\,\mathfrak{b}_j)} 
	\end{equation}
\end{prp}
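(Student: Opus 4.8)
The plan is to reduce the statement to Proposition~\ref{prop:total_number} by means of the combinatorial multiplication principle, after first arguing that a maximal exact matching on the whole dataset $D=(A,\,B)$ decomposes independently across the $n$ matchable cluster pairs. First I would establish this decomposition. By Proposition~\ref{prop:cluster_equivalence}, a patient $p\in A$ can be exactly matched only to patients $q\in B$ sharing $p$'s covariate vector, hence only to patients lying in the single cluster $C_{B,j}$ whose assigned covariate vector equals $cv(p)$; and by Proposition~\ref{prop:cluster} every patient belongs to exactly one cluster. Therefore the matchable cluster pairs $(C_{A,j},\,C_{B,j})$, $1\le j\le n$, are pairwise disjoint, no match is possible between patients belonging to cluster pairs with distinct indices, and every exact match of $D$ is contained entirely within one of these pairs. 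Consequently any exact matching $M$ on $D$ is the disjoint union $M=\bigcup_{j=1}^{n} M_j$ with $M_j \define M\cap (C_{A,j}\times C_{B,j})$, and conversely the union of any choice of exact matchings $M_j$ on the individual pairs is again an exact matching on $D$.

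Next I would show that $M$ is a maximal exact matching on $D$ if and only if each $M_j$ is a maximal exact matching on $(C_{A,j},\,C_{B,j})$. This follows from the disjointness established above: since no cross-cluster pair is matchable, a matchable pair can be adjoined to $M$ precisely when it can be adjoined to the corresponding $M_j$. Hence, by Proposition~\ref{prop:total_number}.\ref{item:total_number_1}, each maximal $M_j$ contains exactly $\min(\mathfrak{a}_j,\,\mathfrak{b}_j)$ matches, and by Proposition~\ref{prop:total_number}.\ref{item:total_number_2} there are exactly $\binom{\max(\mathfrak{a}_j,\,\mathfrak{b}_j)}{\min(\mathfrak{a}_j,\,\mathfrak{b}_j)}$ distinct possibilities for $M_j$, independently of the other indices.

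Finally I would invoke the multiplication principle: a maximal exact matching on $D$ is specified exactly by independently choosing one maximal exact matching on each of the $n$ disjoint matchable cluster pairs, and the number of options for pair $j$ equals $\binom{\max(\mathfrak{a}_j,\,\mathfrak{b}_j)}{\min(\mathfrak{a}_j,\,\mathfrak{b}_j)}$ irrespective of the choices made for the other pairs, so the total count is $\prod_{j=1}^{n}\binom{\max(\mathfrak{a}_j,\,\mathfrak{b}_j)}{\min(\mathfrak{a}_j,\,\mathfrak{b}_j)}$. A clean way to make the last step fully rigorous is induction on $n$, with base case $n=1$ being precisely Proposition~\ref{prop:total_number}.\ref{item:total_number_2} and the inductive step peeling off the $n$-th cluster pair and applying the bijection $M\mapsto (M_n,\,\bigcup_{j<n}M_j)$. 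The main obstacle is the decomposition argument itself — specifically, verifying that maximality is inherited at the level of each individual cluster pair and that no interaction between pairs can create or destroy a match — since once that structural fact is in place the remaining counting is routine.
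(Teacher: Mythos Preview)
Your argument is correct and follows the same route as the paper: invoke Proposition~\ref{prop:total_number} for each cluster pair and multiply. Your version is considerably more detailed than the paper's two-line proof, explicitly justifying the decomposition and the inheritance of maximality that the paper leaves implicit.
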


\begin{proof}
	By Proposition~\ref{prop:total_number} there exist $\binom{\max(\mathfrak{a}_j,\,\mathfrak{b}_j)}{\min(\mathfrak{a}_j,\,\mathfrak{b}_j)}$ 
	exact maximal matchings for every pair of exact matching clusters $1\le j \le n$. Therefore we have 
	$$\prod_{j=1}^{n} \binom{\max(\mathfrak{a}_j,\,\mathfrak{b}_j)}{\min(\mathfrak{a}_j,\,\mathfrak{b}_j)}$$ 
	maximal exact matchings in total.
\end{proof}

Note that the number of different maximal matchings in Proposition~\ref{prop:total_number}
is smaller for exact matchings than it is for 
$\delta$-matchings with $\delta>0$ or one-to-many matchings.

Proposition~\ref{prop:total_number} shows that even for small datasets
the naive way of calculating all possible matchings can be infeasible 
due to the binomial coefficient. One could now calculate only one or several matchings, but
these will possibly neglect important parts of available information in the dataset.

\subsection{A deterministic balancing exact matching algorithm}
\label{subsec:algorithm}

Motivated by the findings of the previous subsection and the use-oriented necessity to use all 
available information in a given dataset, we investigate the outcomes of an observed 
value in a dataset regarding clusters. For simplicity of presentation we henceforth assume that 
$\mathfrak{o}(x)$ is in $\{0,\,1\}$, see Remark~\ref{rem:obs_values} for a short discussion of other settings. 

\begin{dfn}[Relative frequency of an observed value in a cluster]
	\label{def:mean}
	Let \linebreak
	$C_H \define \{x_1,\,\ldots,\,x_{\vert C_H\vert }\}$ be a cluster in therapy group $H$. Then the relative frequency
	of the observed value $\mathfrak{o}(x_u) = 1$ in $C_H$ is defined as 
	\begin{equation}
	\label{eq:def_mean}
	F(C_H) \define \frac{1}{\vert C_H \vert}\sum_{u=1}^{\vert C_H\vert}\mathfrak{o}(x_u).
	\end{equation}
\end{dfn}

\begin{rem}[Different intervals for observational values]
	\label{rem:obs_values}
	Besides simplicity of presentation, the assumption that $\mathfrak{o}(x)$ is in $\{0,\,1\}$ has several advantages and
	can be easily generalized:
	\begin{enumerate}
		\item The relative frequency of the observed value $\mathfrak{o}(x_u) = 1$ in $C_H$ is $1-F(C_H)$.
		\item Any binary setting with $\tilde{\mathfrak{o}} \in \{0,\,K\},\,K\in\mathbb{R}$ can be mapped to $\mathfrak{o} \in\{0,\,1\}$.
		\item For non-binary outcomes $\mathfrak{o} \in \{K_1,\,K_2,\,\ldots\}$, one has to consider the modification 
		\linebreak
		$F^{(K_i)}(C_H) = \frac{1}{\vert C_H \vert}\sum_{u=1}^{\vert C_H\vert}\chi\big(\mathfrak{o}(x_u) = K_i\big)$, where $\chi\big(\mathfrak{o}(x_u) = K_i\big)$ denotes 
		the indicator 
		\linebreak 
		function, i.e. $\chi\big(\mathfrak{o}(x_u) = K_i\big) = \begin{cases}
		1, \quad \text{if } \mathfrak{o}(x_u) = K_i\\
		0, \quad \text{otherwise}
		\end{cases}$
	\end{enumerate} 
\end{rem}

The relative frequency of an observed value in a cluster 
can be seen as the relative outcome value 
for the cluster. In the context of statistical 
matching the observed value data of patients 
should only contribute to the final conclusion if 
the whole cluster can be matched. Regrettably this 
is not the case in general as a cluster $C_A \define \{x_1,\,\ldots,\,x_\mathfrak{a}\}$ 
does not 
necessarily have a matching cluster. Additionally even if a matching cluster
${C_B \define \{z_1,\,\ldots,\,z_\mathfrak{b}\}}$ for $C_A$ exists only
an accumulated observed value of $\min(\mathfrak{a},\,\mathfrak{b})$ 
patients should contribute to the frequency evaluated in the end to 
prevent a distortion of the end result by large clusters. 
A first approach, which we will refine subsequently, to prevent this 
distortion leads to the definition of the relative matching frequency of an observed value. 

\begin{dfn}[Relative matching frequency of an observed value]
	\label{def:rel_matching_mean}
	Let\\$C_A \define \{x_1,\,\ldots,\,x_\mathfrak{a}\},\,C_B \define \{z_1,\,\ldots,\,z_\mathfrak{b}\}$
	be two exact matching clusters. Then the relative matching frequency of an observed value $\mathfrak{o}(x_v) = 1$ for $C_A$ is defined as
	\begin{equation}
	\label{eq:def_rel_matching_mean}
	F_M(C_A) \define \frac{1}{\mathfrak{a}}\sum_{v=1}^{\min(\mathfrak{a},\,\mathfrak{b})}\mathfrak{o}(x_v).
	\end{equation}
\end{dfn}

Using the relative matching frequency of observed values to evaluate the final outcome of a dataset in terms of 
an observed value results in incomplete usage of information as only $\min\{\mathfrak{a},\,\mathfrak{b}\}$ 
patients are matched and thus only the observed values of $\min\{\mathfrak{a},\,\mathfrak{b}\}$ patients affect the outcome. 
This problem is independent of the matching method used, if the method does 
not consider all possible matchings. 
Note that many commonly used matching methods as described e.g. in \citep{Stuart2010} 
implicitly consider the relative matching frequency $F_M(C_A)$ as a result after a single matching 
realization as can be seen by interpreting the used 
patients as clusters appropriate to the chosen $\delta$.

We change the perspective to show that usage of the full information available is possible. For this we begin by considering 
one realization of a maximal matching between clusters as the result of a random experiment. As all patients 
in a cluster are the same with respect to their covariates, all patients in the same cluster should have the same probability to 
be chosen in a maximal matching to be matched to patients from a matching cluster. Thus every possible maximal matching has the same 
probability to appear in a single maximal matching experiment. Repeating the random experiment for maximal matchings between two clusters 
results in a sequence of maximal matchings, which we call a uniform sequence of matchings.

\begin{dfn}[Uniform sequence of matchings]
	\label{def:uniform_sequence}
	Let $C_A \define \{x_1,\,\ldots,\,x_\mathfrak{a}\}$ and \\$C_B \define \{z_1,\,\ldots,\,z_\mathfrak{b}\}$ 
	be two exact matching clusters. An infinite sequence of matchings $M = (M_1,\,M_2,\,\ldots)$ 
	is called a uniform sequence of matchings iff every possible matching between patients of $C_A$ and $C_B$ 
	has the same probability to be drawn as a matching $M_r$ in the sequence. 
\end{dfn}

With these notions we can show that the expectancy of the observed value over all possible maximal matchings is a term, 
whose value can be directly calculated through a cluster matching. Proposition~\ref{prop:mean_single} examines this for 
the case of two exact matching clusters.

\begin{prp}
	\label{prop:mean_single}
	Let $C_A \define \{x_1,\,\ldots,\,x_\mathfrak{a}\},\,C_B \define \{z_1,\,\ldots,\,z_\mathfrak{b}\}$
	be two exact matching clusters and let $M$ be a uniform sequence of maximal matched pairs between $C_A$ and $C_B$. Then
	\begin{enumerate}
		\item Every $M_k$ contains $\min(\mathfrak{a},\,\mathfrak{b})$ matching pairs and~\label{item:mean_single_number}
		\item The expectancy of the relative matching frequency over the sequence of uniform matchings for $C_A$ and $C_B$ can be calculated as
		\begin{equation}
		\label{eq:mean_single}
		\begin{aligned}
		\mathbb{E}({C_A}) &\define \lim\limits_{r\to \infty} \frac{1}{r}\Big(\sum_{k=1}^{r}F_M(C_A)_{k}\Big) = \frac{\min(\mathfrak{a},\,\mathfrak{b})}{\mathfrak{a}^2} \sum_{v=1}^{\mathfrak{a}}\mathfrak{o}(x_v)
		\quad \text{and} \\
		\mathbb{E}({C_B}) &\define \lim\limits_{r\to \infty} \frac{1}{r}\Big(\sum_{k=1}^{r}F_M(C_B)_{k}\Big) 
		= \frac{\min(\mathfrak{a},\,\mathfrak{b})}{\mathfrak{b}^2} \sum_{w=1}^{\mathfrak{b}}\mathfrak{o}(z_w). 
		\end{aligned}
		\end{equation}
	\end{enumerate}
\end{prp}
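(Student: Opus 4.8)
The first claim is immediate from Proposition~\ref{prop:total_number}.\ref{item:total_number_1}: since $M$ is a uniform sequence of \emph{maximal} matchings, every single matching $M_k$ in the sequence realizes the maximal matchable number of pairs between $C_A$ and $C_B$, which by that proposition equals $\min(\mathfrak{a},\,\mathfrak{b})$. So I would dispatch item~\ref{item:mean_single_number} in one sentence.

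For item~2, assume w.l.o.g.\ $\mathfrak{a} \le \mathfrak{b}$, so $\min(\mathfrak{a},\,\mathfrak{b}) = \mathfrak{a}$; I will treat $\mathbb{E}(C_A)$ first and then $\mathbb{E}(C_B)$ by the symmetric argument. The plan is to compute the expected value of a single draw $F_M(C_A)_k$ under the uniform distribution over matchings, and then invoke the strong law of large numbers (or simply the definition of expectancy via relative frequencies, which is how the paper frames the uniform sequence) to identify the Ces\`aro limit $\lim_{r\to\infty}\frac1r\sum_{k=1}^r F_M(C_A)_k$ with that single-draw expectation. The key step is the combinatorial computation of $\mathbb{E}[F_M(C_A)_k]$. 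By Definition~\ref{def:rel_matching_mean}, $F_M(C_A) = \frac1{\mathfrak{a}}\sum_{v}\mathfrak{o}(x_v)$ summed over the $\mathfrak{a}$ matched patients of $C_A$; but when $\mathfrak{a}\le\mathfrak{b}$, \emph{all} of $C_A$ is matched, so actually $F_M(C_A)_k = \frac1{\mathfrak{a}}\sum_{v=1}^{\mathfrak{a}}\mathfrak{o}(x_v) = F(C_A)$ deterministically, and the claimed formula $\frac{\mathfrak{a}}{\mathfrak{a}^2}\sum_v \mathfrak{o}(x_v) = \frac1{\mathfrak{a}}\sum_v\mathfrak{o}(x_v)$ holds trivially. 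The substance therefore lies in the $\mathbb{E}(C_B)$ direction, where $\mathfrak{b} > \mathfrak{a}$ and only $\mathfrak{a}$ of the $\mathfrak{b}$ patients of $C_B$ are matched in each draw.

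So the heart of the argument is: in a uniformly random maximal matching, the set of matched patients of $C_B$ is a uniformly random $\mathfrak{a}$-element subset of $C_B$ (this uses Proposition~\ref{prop:total_number}.\ref{item:total_number_2}, that there are $\binom{\mathfrak{b}}{\mathfrak{a}}$ such matchings and each is equally likely, together with the observation that which pairing is used within a fixed matched subset does not affect $F_M$). Hence each fixed patient $z_w \in C_B$ is matched with probability $\binom{\mathfrak{b}-1}{\mathfrak{a}-1}/\binom{\mathfrak{b}}{\mathfrak{a}} = \mathfrak{a}/\mathfrak{b}$. By linearity of expectation,
\[
\mathbb{E}\big[F_M(C_B)_k\big] = \frac1{\mathfrak{b}}\sum_{w=1}^{\mathfrak{b}} \mathfrak{o}(z_w)\cdot \Pr[z_w \text{ matched}] = \frac1{\mathfrak{b}}\cdot\frac{\mathfrak{a}}{\mathfrak{b}}\sum_{w=1}^{\mathfrak{b}}\mathfrak{o}(z_w) = \frac{\min(\mathfrak{a},\,\mathfrak{b})}{\mathfrak{b}^2}\sum_{w=1}^{\mathfrak{b}}\mathfrak{o}(z_w),
\]
where the leading $1/\mathfrak{b}$ comes from the normalization by $|C_B| = \mathfrak{b}$ in Definition~\ref{def:rel_matching_mean} (note $\mathfrak{b} = \max(\mathfrak{a},\mathfrak{b})$ here, but the definition divides by $\mathfrak{b}$ regardless). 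Finally, since the $F_M(C_B)_k$ are i.i.d.\ bounded random variables, the strong law of large numbers gives $\frac1r\sum_{k=1}^r F_M(C_B)_k \to \mathbb{E}[F_M(C_B)_1]$ almost surely, which is exactly the stated limit; the $\mathbb{E}(C_A)$ formula follows identically (and degenerately, as noted). The main obstacle is purely expository: making precise the claim that "matched subset of $C_B$ is uniform" from the matching-counting in Proposition~\ref{prop:total_number}, i.e.\ checking that the uniform distribution over the $\binom{\mathfrak{b}}{\mathfrak{a}}$ matchings pushes forward to the uniform distribution over $\mathfrak{a}$-subsets of $C_B$ — once that is granted, the per-patient inclusion probability $\mathfrak{a}/\mathfrak{b}$ and the conclusion are routine.
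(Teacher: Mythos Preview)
Your proposal is correct and follows essentially the same approach as the paper: reduce to $\mathfrak{a}\le\mathfrak{b}$, observe that $F_M(C_A)_k$ is then constant, compute the single-draw expectation of $F_M(C_B)_k$ via the per-patient inclusion probability $\mathfrak{a}/\mathfrak{b}$, and conclude by the law of large numbers. Your version is in fact slightly cleaner than the paper's, which splits the $C_B$ computation into the cases $\mathfrak{a}=1$ and $\mathfrak{a}>1$ without real need and asserts the inclusion probability rather than deriving it via $\binom{\mathfrak{b}-1}{\mathfrak{a}-1}/\binom{\mathfrak{b}}{\mathfrak{a}}$ as you do.
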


\begin{proof}
	By Proposition~\ref{prop:group_number}, every exact match with a maximum number 
	of matches includes $\min(\mathfrak{a},\,\mathfrak{b})$ pairs of $C_A$ and $C_B$. Therefore every $M_k$ contains
	$\min(\mathfrak{a},\,\mathfrak{b})$ pairs of patients from $C_A$ and $C_B$. 
	For the second part we can w.l.o.g. assume that $\mathfrak{a}\le \mathfrak{b}$, as one can simply substitute in the other case.
	As $\mathfrak{a} = \min(\mathfrak{a},\,\mathfrak{b})$ it follows that $$F_M(C_A)_k = F_M(C_A) = F(C_A) = \frac{1}{\mathfrak{a}}\sum_{v=1}^{\mathfrak{a}} \mathfrak{o}(x_{v}) = \frac{\mathfrak{a}}{\mathfrak{a}^2}\sum_{v=1}^{\mathfrak{a}} \mathfrak{o}(x_{v})  $$ for all $k$. 
	
	For $\mathfrak{a} = 1$ it follows that exactly one patient $\widetilde{z}_{k,\,w},\,w\in\{1,\,\ldots,\,\mathfrak{b}\}$ of $C_B$ gets chosen in every maximal matching $M_k$. As all patients have the same probability to be chosen the probability is $\frac{1}{\mathfrak{b}}$ for every patient. Evaluating the limits and referring to the patients of $C_B$ chosen in one realization of a maximal matching as
	$\widetilde{z}_{k,\,w}$ leads to:
	\begin{equation}
	\label{eq:group_number_1}
	\lim\limits_{r\to \infty} \frac{1}{r}\Big(\sum_{k=1}^{r}F_M(C_B)_{k}\Big) =
	\lim\limits_{r\to \infty} \frac{1}{r}\Big(\sum_{k=1}^{r} \frac{1}{\mathfrak{b}}\sum_{v=1}^{\mathfrak{a}}\mathfrak{o}(\widetilde{z}_{k,\,w})\Big) = 
	\frac{1}{\mathfrak{b}^2}\sum_{w=1}^{\mathfrak{b}} \mathfrak{o}(z_{w}),
	\end{equation}
	where the second equation follows by the law of large numbers as all patients have the same probability to be chosen in a maximal matching.
	
	Now let $\mathfrak{a} > 1$. Thus out of $\mathfrak{b}$ patients $\mathfrak{a}$ patients get matched and every patient has the same probability to be chosen in a maximal matching $M_k$. Again by the law of large numbers in the second equation it follows that
	\begin{equation}
	\lim\limits_{r\to \infty} \frac{1}{r}\Big(\sum_{k=1}^{r} F_M(C_B)_k\Big) = 		
	\lim\limits_{r\to \infty} \frac{1}{r}\Big(\sum_{k=1}^{r} \frac{\mathfrak{a}}{\mathfrak{b}}\sum_{v=1}^{\mathfrak{a}}\mathfrak{o}(\widetilde{z}_{k,\,w})\Big) = 
	\frac{\mathfrak{a}}{\mathfrak{b}^2}\sum_{w=1}^{\mathfrak{b}} \mathfrak{o}(z_{w}).
	\end{equation}
\end{proof}

It is straightforward to generalize Proposition~\ref{prop:mean_single} to uniform 
sequences of matchings over therapy groups containing several clusters, 
as a single realization of maximal matchings between clusters 
is independent of maximal matchings between other clusters.

\begin{prp}
	\label{prop:mean_total}
	Let $n$ be the number of exact matching clusters and let $C_{A,j}$ and $C_{B,j}$ with $1\le j \le n$ be exact matching clusters of $A$ and $B$
	with $\vert C_{A,\,j} \vert = \mathfrak{a}_j$ and ${\vert C_{B,\,j} \vert = \mathfrak{b}_j}$ respectively. Let $M$ be the uniform sequence of maximal exact
	matchings between all clusters. Then
	\begin{enumerate}
		\item Every element $M_k$ of $M$ contains 
		\begin{equation}
		\label{eq:mean_total_elements}
		\vert M_k \vert = \sum_{j=1}^{n}\min(\mathfrak{a}_j,\,\mathfrak{b}_j) 
		\end{equation}
		matches.
		\item For the expectancy of the relative matching frequencies of observed values for $A$ of any maximum exact matching it holds that
		\begin{equation}
		\label{eq:mean_total}
		\mathbb{E}_{A} \define \lim\limits_{r \to \infty} \frac{1}{r} \sum_{k=1}^{r}\Big(\sum_{j = 1}^{n} F_M(C_{A,\,j})_k\Big) = \sum_{j=1}^{n} \mathbb{E}(C_{A,\,j}),
		\end{equation}
		with $\widetilde{x}_{j,\,k,\,v},\,{v\in\{1,\,\ldots,\,\mathfrak{a_j}\}}$ as the patients from cluster $C_{A,\,j}$ chosen in the $k$-th matching $M_k$. Analogously it holds that 
		\begin{equation*}
		\mathbb{E}_{B} \define \lim\limits_{r \to \infty} \frac{1}{r} \sum_{k=1}^{r}\Big(\sum_{j = 1}^{n} F_M(C_{B,\,j})_k\Big) = \sum_{j=1}^{n} \mathbb{E}(C_{B,\,j}),
		\end{equation*}
	\end{enumerate}
\end{prp}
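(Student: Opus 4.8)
The plan is to reduce both claims to the single-cluster-pair case already settled in Proposition~\ref{prop:mean_single}, using the fact -- noted just before the statement -- that the restriction of a maximal exact matching on the whole dataset to any fixed matchable cluster pair is itself a maximal matching of that pair, and that these restrictions are drawn independently across the $n$ pairs.

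For the first claim I would argue as follows. Recall from the case distinction preceding Definition~\ref{def:matching_cluster} that an exact match can only occur between two clusters sharing the same assigned covariate vector, i.e.\ between one of the $n$ pairs $(C_{A,j},\,C_{B,j})$. Hence every $M_k$ partitions into its sub-matchings on the pairs $(C_{A,j},\,C_{B,j})$, $1\le j\le n$, and, since $M_k$ is maximal, each such sub-matching is a maximal matching between $C_{A,j}$ and $C_{B,j}$. By Proposition~\ref{prop:total_number}.\ref{item:total_number_1} the $j$-th sub-matching has exactly $\min(\mathfrak{a}_j,\,\mathfrak{b}_j)$ matches, and summing over the $n$ pairs gives $\vert M_k\vert = \sum_{j=1}^n \min(\mathfrak{a}_j,\,\mathfrak{b}_j)$.

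For the second claim, the first step is the trivial interchange of the two finite sums,
\begin{equation*}
\frac{1}{r}\sum_{k=1}^{r}\Big(\sum_{j=1}^{n} F_M(C_{A,\,j})_k\Big) = \sum_{j=1}^{n}\Big(\frac{1}{r}\sum_{k=1}^{r} F_M(C_{A,\,j})_k\Big).
\end{equation*}
The second step is to observe that, because the per-pair restrictions of the $M_k$ are independent and each is uniform over the maximal matchings of its pair, the sequence obtained by restricting each $M_k$ to pair $j$ is a uniform sequence of matchings between $C_{A,j}$ and $C_{B,j}$ in the sense of Definition~\ref{def:uniform_sequence}. Proposition~\ref{prop:mean_single} then applies to this sequence and yields that $\lim_{r\to\infty}\frac{1}{r}\sum_{k=1}^{r} F_M(C_{A,\,j})_k$ exists and equals $\mathbb{E}(C_{A,\,j})$. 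Since $n$ is finite and each of the $n$ inner limits exists, the limit of the finite sum equals the sum of the limits, giving $\mathbb{E}_A = \sum_{j=1}^n \mathbb{E}(C_{A,\,j})$. The argument for $B$ is verbatim the same with the roles of the two groups exchanged.

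The only genuinely delicate point is the independence/uniformity claim used in the second step: one has to be sure that picking a maximal exact matching of the whole dataset uniformly among all such matchings is the same as picking, independently for each $j$, a maximal matching of $(C_{A,j},\,C_{B,j})$ uniformly. This is precisely the product structure counted in Proposition~\ref{prop:group_number}, so I would invoke it rather than re-derive it. Everything else -- the set-partition of $M_k$, the interchange of the two finite sums, and passing a finite sum through a limit -- is routine.
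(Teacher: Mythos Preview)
Your argument is correct and follows essentially the same route as the paper: both claims are obtained by summing the corresponding single-pair statements of Proposition~\ref{prop:mean_single} over $j=1,\ldots,n$, using that matchings across distinct cluster pairs are independent. Your version is in fact more careful than the paper's, since you explicitly justify the per-pair uniformity via the product structure of Proposition~\ref{prop:group_number} and spell out the (trivial) interchange of the finite sum with the limit, whereas the paper simply asserts independence and sums.
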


\begin{proof}
	Equation~\eqref{eq:mean_total_elements} holds as it is a summation over 
	the equation from Proposition~\ref{prop:mean_single}.\ref{item:mean_single_number}. 
	Similarly equation~\eqref{eq:mean_total} holds as a summation over equation~\eqref{eq:mean_single} as 
	a maximal matching between two clusters is independent of a maximal matching between other clusters.
\end{proof}

The previous proposition shows that for each therapy group $A$ and $B$ not all possible matches are realized during a
single matching and that the relative matching frequency
of observed values for uniform sequences of matchings converges to the expectancy of the
observed results, which in this case is an easily calculable value. The added benefit being that the
term is unique for
a dataset and independent of the used matching method.\\ 

We are now prepared to propose an algorithm, which calculates the expectancy of the relative matching 
frequencies of observed values in a deterministic fashion. The full algorithm (Algorithm~\ref{alg:DBSeM}) is
divided into three stages.

In its first stage clusters according to Definition~\ref{def:cluster} are generated (Algorithm~\ref{alg:cluster}). In the second
stage the algorithm will try to match as many clusters as possible  (Algorithm~\ref{alg:matching}), while in the
third stage it weights the patients of each cluster in accordance to the size of its
matching cluster and its own size according to equation \eqref{eq:mean_single} (Algorithm~\ref{alg:weighting}).

\begin{algorithm}[H]
	\caption{Clustering step}
	\label{alg:cluster}
	\begin{algorithmic}[1]
		\State Set $c = 0$ and $is\_clustered(x_v) =0$ for all patients in $A$. \label{state:cluster_1}
		\For{each patient $x_v,\, 1\leq v \leq \vert A \vert $} \label{state:cluster_2}
		\If{$is\_clustered(x_v) \equiv 0$} \label{state:cluster_3}
		\State Set $c = c+1$, $C_{A,\,c} := \{x_v\}$ and $is\_clustered(x_v) = 1$. \label{state:cluster_4}
		\Else
		\State \algorithmiccontinue
		\EndIf \label{state:cluster_5}
		\For{each patient $x_u$ with $v < u\leq \vert A\vert $ and $is\_clustered(x_u) \equiv 0$} \label{state:cluster_6}
		\If{$d(x_u,\,C_{A,\,c})\equiv 0$} \label{state:cluster_7}
		\State Set $C_{A,\,c} = C_{A,\,c} \cup x_u$ and $is\_clustered(x_u) = 1$ \label{state:cluster_8}
		\EndIf \label{state:cluster_9}
		%		\algstore{clustering_alg}
		%	\end{algorithmic}
		%\end{algorithm}
		%%
		%%%	\floatstyle{nocaptionruled}
		%%%	\restylefloat{algorithm}
		%%	\begin{algorithm}[H]
		%%		\begin{algorithmic}[1]
		%%			\algrestore{clustering_alg}
		\EndFor \label{state:cluster_10}
		\EndFor \label{state:cluster_11}
		\State Set $n_A = c$.\label{state:cluster_12}
		\State Repeat steps \ref{state:cluster_1} -- \ref{state:cluster_11} for $B$ and set $n_B = c$. \label{state:cluster_13}
		\State \Return $C_{A,\,1},\,\ldots,\,C_{A,n_A},\,C_{B,\,1},\,\ldots,\,C_{B,n_B}$.
	\end{algorithmic}
\end{algorithm}

\begin{algorithm}[H]
	\caption{Matching step}
	\label{alg:matching} 
	\begin{algorithmic}[1]
		\State Set $i = 1$.
		\For{every cluster $C_{A,\,g},\,1\leq g \leq n_A$} \label{state:M_1}
		\State Search for cluster $C_{B,\,i}$ with $d(C_{A,\,g},\,C_{B,\,i}) \equiv 0$. \label{state:M_2}
		\If{a cluster $C_{B,\,i}$ was found in the previous step} \label{state:M_4}
		\State Create matching set $M_i = \emptyset$. \label{state:M_3}
		\State Set $M_i = \{C_{A,\,g},\,C_{B,\,i}\}$. \label{state:M_5}
		\State Set $i = i+1$.
		\EndIf \label{state:M_6}
		\EndFor \label{state:M_7}
		\State \Return Matching sets $M_1,\,\ldots,\,M_n$.
	\end{algorithmic}
\end{algorithm}
\begin{algorithm}[H]
	\caption{Weighting step}
	\label{alg:weighting}
	\begin{algorithmic}[1]
		\State Set $w(C_{A,\,g}) = 0\,\,\forall 1\leq g \leq n_A$ and $w(C_{B,\,h}) = 0\,\,\forall 1\leq h \leq n_B$
		\For{all $C_{A,\,g},\,1\leq g \leq n_A$, with $M_g \neq \emptyset$} \label{MW:state_2}
		\State Search for $C_{B,\,h},\,1\leq h \leq n_B$, as the previously calculated matching cluster of $C_{A,g}$.
		\State Calculate $S_{A,\,g} := S_{B,\,h} := \min\{\vert C_{A,\,g}\vert,\,\vert C_{B,\,h}\vert \}$.
		\State Compute $w(C_{A,\,g}) := S_{A,\,g}/\vert C_{A,\,g} \vert^{2}$ and $w(C_{B,\,h}) := S_{B,\,h}/\vert C_{B,\,h} \vert^{2}$.
		\EndFor
		\State Compute Min-weighted results: \label{MW:state_4}
		\begin{eqnarray}
		R_A &:=& 
		\sum_{h=1}^{n_A} \big[ w(C_{A,\,h}) \sum_{v=1}^{\vert C_{A,\,h}\vert} \mathfrak{o}(x_{v,\,h})\big],\label{eq:algorithm_A}\\ 
		R_B &:=& 
		\sum_{g=1}^{n_B} \big[w(C_{B,\,g}) \sum_{w=1}^{\vert C_{B,\,g}\vert} \mathfrak{o}(z_{w,\,g})\big]\label{eq:algorithm_B},
		\end{eqnarray} where $x_{v,\,h} \in C_{A,\,h}$ and $z_{w,\,g} \in C_{B,\,g}$.
		\State \Return weighted results $R_A,\,R_B$.
	\end{algorithmic}
\end{algorithm}
%\newpage
Linked together, Algorithms~\ref{alg:cluster} -- \ref{alg:weighting} form the~\hyperref[alg:DBSeM]{DeM algorithm}.
\begin{algorithm}[H]
	\caption{Deterministic balancing score exact matching algorithm (DeM)}
	\label{alg:DBSeM}
	\begin{algorithmic}[1]
		\State Cluster the patients with Algorithm~\ref{alg:cluster} for $A$ and $B$
		into $C_{A,\,1},\,\ldots,\,C_{A,n_A},\,C_{B,\,1},\,\ldots,\,C_{B,n_B}$.
		\State Compute matchings sets $M_1,\,\ldots,\,M_n$ through application of 
		the matching Algorithm~\ref{alg:matching} on the clusters $C_{A,\,1},\,\ldots,\,C_{A,n_A},\,C_{B,\,1},\,\ldots,\,C_{B,n_B}$.
		\State Compute the weighted result with  Algorithm~\ref{alg:weighting} for $M_1,\,\ldots,\,M_n$
		\State \Return Weighted results $R_A,R_B$ and the set of matched clusters $M$.
	\end{algorithmic}
\end{algorithm}

As shown in Proposition~\ref{prop:mean_total}, the~\hyperref[alg:DBSeM]{DeM algorithm} calculates the expectancy value for the uniform sequence of exact maximal matchings for a given dataset (equations \eqref{eq:mean_total} and respectively~\eqref{eq:algorithm_A} or~\eqref{eq:algorithm_B}), and uses every information contained in the dataset available for an exact matching  (equations~\eqref{eq:mean_total_elements} and steps~\ref{state:M_4} -- \ref{state:M_6} of Algorithm~\ref{alg:matching} in conjunction with Algorithm~\ref{alg:cluster}). This is summarized in the following theorem.

\begin{thm}[Matching properties of the DeM algorithm]
	\label{thm:maximal_matches} ~~\linebreak
	The~\hyperref[alg:DBSeM]{DeM algorithm}, Algorithm~\ref{alg:DBSeM}, 
	\begin{enumerate}
		\item matches all possible exact matches and
		\item produces exactly one matching result in accordance to the expectancy value of all possible matches in the dataset.
	\end{enumerate}
\end{thm}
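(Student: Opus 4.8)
The idea is to verify that each of the three stages of the \hyperref[alg:DBSeM]{DeM algorithm} carries out exactly the operation it is designed for, and then to read off both claims from Propositions~\ref{prop:total_number} and~\ref{prop:mean_total}. First I would argue that Algorithm~\ref{alg:cluster} returns a valid clustering in the sense of Definition~\ref{def:cluster}. By the Remark following~\eqref{eq:manhattan}, the relation ``$d(p,q)=0$'' coincides with ``$cv(p)=cv(q)$'' and is therefore an equivalence relation on each therapy group; I would show by induction on the outer loop that the set $C_{A,c}$ built when the loop opens a new cluster at a patient $x_v$ is precisely the $d$-equivalence class of $x_v$. Indeed, every patient $x_u$ with $u<v$ and $d(x_u,x_v)=0$ has already been placed in some cluster (otherwise $x_v$ would have been added to it and the test in line~\ref{state:cluster_3} would fail), so among $x_1,\dots,x_v$ the class of $x_v$ is $\{x_v\}$; the inner loop in lines~\ref{state:cluster_6}--\ref{state:cluster_10} then appends every later $x_u$ with $d(x_u,C_{A,c})=0$, i.e.\ every remaining member of the class. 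Hence the returned sets satisfy Definition~\ref{def:cluster}.\ref{item:def_cluster_1}--\ref{item:def_cluster_3} and, by Proposition~\ref{prop:cluster}, partition $A$; the same argument applies to $B$.

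Next I would treat Algorithm~\ref{alg:matching}. For a fixed cluster $C_{A,g}$, Proposition~\ref{prop:cluster_equivalence} says a cluster $C_B$ with $d(C_{A,g},C_B)\equiv 0$ exists iff it carries the covariate vector $cv(C_{A,g})$, and Proposition~\ref{prop:cluster} (part~3) says at most one cluster of $B$ carries that vector; thus the search in line~\ref{state:M_2} either fails or returns a uniquely determined cluster, and it succeeds exactly when $C_{A,g}$ is an exact matchable cluster in the sense of Definition~\ref{def:matching_cluster}. Consequently Algorithm~\ref{alg:matching} enumerates precisely the set of exact matchable cluster pairs. Combining this with Proposition~\ref{prop:total_number}.\ref{item:total_number_1} — every maximum exact matching uses $\min(\mathfrak{a}_j,\mathfrak{b}_j)$ pairs from the $j$-th matchable pair and none from non-matchable clusters — shows that no exact match is omitted, which is claim~1.

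Finally, for claim~2 I would check that the weighting of Algorithm~\ref{alg:weighting} realizes the expectancy of Proposition~\ref{prop:mean_total}. For each matched cluster $C_{A,g}$ with partner $C_{B,h}$ the algorithm sets $w(C_{A,g})=\min(\mathfrak{a}_g,\mathfrak{b}_h)/\mathfrak{a}_g^{2}$ and $w=0$ on unmatched clusters, so~\eqref{eq:algorithm_A} evaluates to $\sum_{j}\min(\mathfrak{a}_j,\mathfrak{b}_j)\,\mathfrak{a}_j^{-2}\sum_{v}\mathfrak{o}(x_{v,j})=\sum_j\mathbb{E}(C_{A,j})=\mathbb{E}_A$ by Propositions~\ref{prop:mean_single} and~\ref{prop:mean_total}; symmetrically $R_B=\mathbb{E}_B$. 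Uniqueness is then immediate: the patient ordering fixes the output of Algorithm~\ref{alg:cluster}, the matching partner of each cluster is unique by the previous paragraph, and Algorithm~\ref{alg:weighting} only performs arithmetic, so the triple $(R_A,R_B,M)$ is completely determined by the dataset.

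\textbf{Main obstacle.} I expect the only non-routine step to be the induction of the first paragraph: one must argue carefully that the single greedy pass of Algorithm~\ref{alg:cluster} cannot leave a patient with an already-seen covariate vector outside its cluster, i.e.\ that the constructed sets genuinely meet the maximality condition Definition~\ref{def:cluster}.\ref{item:def_cluster_2}. Once the clustering is known to be correct, everything else reduces to bookkeeping on top of Propositions~\ref{prop:total_number}, \ref{prop:mean_single} and~\ref{prop:mean_total}.
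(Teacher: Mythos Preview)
Your plan is correct and aligns with the paper's approach: the paper does not give Theorem~\ref{thm:maximal_matches} a standalone proof but presents it as a summary of the preceding development, pointing to Proposition~\ref{prop:mean_total} and equations~\eqref{eq:mean_total}, \eqref{eq:algorithm_A}, \eqref{eq:algorithm_B} for claim~2 and to steps~\ref{state:M_4}--\ref{state:M_6} of Algorithm~\ref{alg:matching} together with Algorithm~\ref{alg:cluster} for claim~1. Your three-stage verification is exactly this chain made explicit; in particular the induction showing that Algorithm~\ref{alg:cluster} really outputs the $d$-equivalence classes (your ``main obstacle'') is a detail the paper simply assumes, and the uniqueness argument you give at the end is essentially what the paper postpones to the determinism part of Theorem~\ref{thm:runtime}.
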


Additionally one can prove that the proposed~\hyperref[alg:DBSeM]{DeM algorithm}  (Algorithm~\ref{alg:DBSeM}) is fast and deterministic. 

\begin{thm}
	\label{thm:runtime}
	The~\hyperref[alg:DBSeM]{DeM algorithm} (Algorithm~\ref{alg:DBSeM}) is a deterministic algorithm and has 
	a runtime of $O(\vert A \vert \cdot \vert B \vert\cdot s + \vert A \vert^2 + \vert B \vert^2)$, where $s$ is 
	the dimension of the covariate vectors.
\end{thm}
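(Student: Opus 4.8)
The plan is to prove the two assertions separately---determinism by direct inspection, and the runtime by analysing the three stages of Algorithm~\ref{alg:DBSeM} one at a time, since Algorithm~\ref{alg:DBSeM} is nothing but the sequential composition of Algorithms~\ref{alg:cluster}--\ref{alg:weighting}.

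For determinism I would simply observe that every loop in Algorithms~\ref{alg:cluster}--\ref{alg:weighting} ranges over a fixed, index-ordered collection (the patients $x_1,\ldots,x_{|A|}$, respectively the clusters $C_{A,1},\ldots,C_{A,n_A}$, and likewise for $B$), every branch tests a deterministic predicate ($is\_clustered(\cdot)$, $d(\cdot,\cdot)\equiv 0$, or $M_g\neq\emptyset$), and no step invokes a randomisation primitive; moreover the ``search'' in step~\ref{state:M_2} of Algorithm~\ref{alg:matching} returns a well-defined object, because by Propositions~\ref{prop:cluster} and~\ref{prop:cluster_equivalence} a cluster of $A$ has \emph{at most one} exact-matching cluster in $B$. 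Hence each stage, and therefore their composition, is deterministic and its output is a function of the input alone.

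For the runtime I would first record two facts: a single evaluation of the metric $d(\cdot,\cdot)$ from~\eqref{eq:manhattan} on $s$-dimensional covariate vectors costs $O(s)$; and it suffices to precompute once, for every patient of $A$ and of $B$, a scalar key that uniquely represents its covariate vector (e.g.\ by hashing, or by one lexicographic sort of the $|A|+|B|$ vectors), at a cost of $O\big((|A|+|B|)\,s\big)$, which I fold into the $|A|\cdot|B|\cdot s$ term since $|A|+|B|=O(|A|\cdot|B|)$. After this preprocessing each covariate comparison is a comparison of scalar keys and costs $O(1)$. Then:
\begin{enumerate}
	\item \textbf{Clustering (Algorithm~\ref{alg:cluster}).} For $A$ the outer loop runs $|A|$ times; an iteration that does not open a new cluster costs $O(1)$, while one that opens a cluster scans the still-unclustered patients of larger index, doing one key comparison per patient. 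Since a patient is examined by an inner loop only while it is still unclustered, and it is clustered at the latest when it becomes a representative, the total number of comparisons over the whole run is at most $\binom{|A|}{2}=O(|A|^2)$; handling $B$ identically, this stage costs $O(|A|^2+|B|^2)$.
	\item \textbf{Matching (Algorithm~\ref{alg:matching}).} For each of the $n_A\le|A|$ clusters of $A$ the search inspects the $n_B\le|B|$ clusters of $B$; charging $O(s)$ to each covariate comparison (needed only if keys are not reused) this is $O(|A|\cdot|B|\cdot s)$, and maintaining the sets $M_i$ adds only $O(n)\le O(|A|+|B|)$.
	\item \textbf{Weighting (Algorithm~\ref{alg:weighting}).} Initialising the weights is $O(|A|+|B|)$; for each matched cluster of $A$ its partner in $B$ (stored during Algorithm~\ref{alg:matching}) is retrieved in $O(1)$ and $S$ and $w$ are computed in $O(1)$; forming $R_A$ and $R_B$ via~\eqref{eq:algorithm_A}--\eqref{eq:algorithm_B} sums $\mathfrak{o}(\cdot)$ once over every patient, i.e.\ $O(|A|+|B|)$. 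So this stage is $O(|A|+|B|)$.
\end{enumerate}
Summing the three stages together with the preprocessing yields the claimed $O\big(|A|\cdot|B|\cdot s+|A|^2+|B|^2\big)$.

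The hard part, I expect, is the clustering stage: in the worst case (all covariate vectors distinct) the double loop of Algorithm~\ref{alg:cluster} really does perform $\Theta(|A|^2)$ covariate comparisons, so a naive $O(s)$ metric evaluation at each of them would give only $O\big((|A|^2+|B|^2)\,s\big)$. The sharper bound therefore rests on two observations that I would want to state carefully: the amortisation argument above, which caps the comparison count at $\binom{|A|}{2}+\binom{|B|}{2}$ (each patient is touched only while unclustered), and the reduction of the per-comparison cost to $O(1)$ via the initial $O\big((|A|+|B|)\,s\big)$ scalar encoding of covariate vectors, after which only the $n_A n_B$ cross-group comparisons of Algorithm~\ref{alg:matching} still carry the factor $s$. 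Everything else is routine accounting.
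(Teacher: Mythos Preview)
Your argument follows the same three-stage decomposition as the paper's proof, and your determinism paragraph is essentially the paper's, only spelled out in more detail. Where you differ is the clustering stage: the paper simply asserts that Algorithm~\ref{alg:cluster} costs $O(|A|^2+|B|^2)$, tacitly counting each distance test in line~\ref{state:cluster_7} as $O(1)$, whereas you correctly note that a literal evaluation of $d(\cdot,\cdot)$ from~\eqref{eq:manhattan} costs $O(s)$ and would give only $O\big((|A|^2+|B|^2)\,s\big)$; you then recover the stated bound by an $O\big((|A|+|B|)\,s\big)$ preprocessing pass that replaces covariate-vector equality by $O(1)$ key comparison. So your route is the paper's, but with an extra justification the paper omits.

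Two small caveats on your side. First, the preprocessing you invoke is not part of Algorithm~\ref{alg:cluster} as written; you should present it explicitly as an implementation of the equality test in line~\ref{state:cluster_7} that leaves the algorithm's output unchanged (so determinism is preserved). Second, ``hashing'' to a scalar key does not by itself guarantee that equal keys imply equal covariate vectors; your alternative---one lexicographic sort of the $|A|+|B|$ vectors and replacement of each vector by its rank---does give the needed injective encoding, but its cost is $O\big((|A|+|B|)\log(|A|+|B|)\,s\big)$ rather than $O\big((|A|+|B|)\,s\big)$, which is still absorbed by $|A|\cdot|B|\cdot s + |A|^2 + |B|^2$. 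With that adjustment your proof is complete and in fact tighter than the paper's.
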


\begin{proof}
	An algorithm is deterministic if given a particular input it will always produce the same output. 
	Algorithm~\ref{alg:DBSeM} takes a dataset as input and
	will, in the case of an exact matching, always produce the same clusters during step \ref{alg:cluster}. 
	As the same clusters were produced in step \ref{alg:cluster}, the same clusters are matched in step 
	\ref{alg:matching}, because of Proposition~\ref{prop:cluster_equivalence}. Step \ref{alg:weighting} 
	calculates the weight of the respective matched clusters, which is always same since the matched clusters
	from step \ref{alg:matching} are the same. Thus Algorithm~\ref{alg:DBSeM} is deterministic.
	
	Evaluating the runtime can be achieved by looking at every step separately:
	\begin{enumerate}
		\item In Algorithm~\ref{alg:cluster} the steps~\ref{state:cluster_1}--\ref{state:cluster_12} have a runtime of $\vert A \vert^2$, while step \ref{state:cluster_13} has a runtime of $\vert B \vert^2$. \label{item:runtime_1}
		\item Algorithm~\ref{alg:matching} investigates every cluster in $B$ at most $\vert A \vert$ times and every comparison between clusters needs $s$ operations to determine the distance. Thus Algorithm~\ref{alg:matching} has a runtime of $O(\vert A \vert \cdot \vert B \vert\cdot s)$. \label{item:runtime_2}
		\item As $n_A \le \vert A \vert$ and $n_B \le \vert B \vert$ it follows that Algorithm~\ref{alg:weighting} has a runtime of $O(\max\{\vert A\vert,\,\vert B\vert\})$. \label{item:runtime_3}
	\end{enumerate}
	Adding all the runtimes together and making no further assumptions in regards to the comparative size of $s,\,\vert A \vert$ and $\vert B \vert$, one concludes that Algorithm~\ref{alg:DBSeM} has a runtime of  $O(\vert A \vert \cdot \vert B \vert\cdot s + \vert A\vert^2 + \vert B \vert^2)$.
\end{proof}

The exclusive applicability of the proposed algorithm to exact matches which can be seen as a limitation will be discussed in Section~\ref{sec:conclusion}.

\subsection{Additional properties of the DeM algorithm}
\label{subsec:properties}

As shown in the previous subsection, the
~\hyperref[alg:DBSeM]{DeM algorithm}
calculates matchings in accordance to the expected value over 
all possible matchings in the dataset. This section discusses two additional properties of the \hyperref[alg:DBSeM]{DeM algorithm}.\\

We first discuss an a posteriori property of the proposed cluster matching. 
For statistical tests it is often necessary to calculate the variance inherent to the final matching result. For clusters this can be achieved by 
looking at matchings through the perspective of hypergeometric distributions:

Let $C_A \define \{x_1,\,\ldots,\,x_\mathfrak{a}\},\,C_B \define \{z_1,\,\ldots,\,z_\mathfrak{b}\}$
be two exact matching clusters, then $\mathfrak{a}$ can be interpreted as the population number of which $\sum_{v=1}^{\mathfrak{a}}\mathfrak{o}(x_v)$
have some property and $\min(\mathfrak{a},\,\mathfrak{b})$ of $\mathfrak{a}$ patients are chosen in this maximal matching. Thus 
a realization of a maximal matching in the sense of relative matching frequencies can be interpreted as a sample drawn from a hypergeometrically distributed random variable
projected onto the interval $[0,\,1]$. 

The view of maximal matchings as realization of a drawing from a hypergeometric distribution concurs with the results of Propositions~\ref{prop:mean_single} and~\ref{prop:mean_total} from the previous subsection as the expectancy of a hypergeometric distribution for two exact matching clusters is $\mathbb{E}(C_A)$ and $\mathbb{E}(C_B)$, respectively, where the terms $\mathfrak{a}$ and $\mathfrak{b}$ stem from reversing the normalization done in the previous subsection. 

Taking this perspective allows to calculate the variance for maximal matchings.

\begin{prp}
	\label{prop:variance}
	Let $C_A \define \{x_1,\,\ldots,\,x_\mathfrak{a}\},\,C_B \define \{z_1,\,\ldots,\,z_\mathfrak{b}\}$
	be two exact matching clusters. Then the variance of matchings for $C_A$ is given by
	\begin{equation}
	\label{eq:single_variance}
	\mathrm{Var}(C_A) = \mathbb{E}(C_A)\Big(1-\frac{\sum_{v=1}^{\mathfrak{a}}\mathfrak{o}(x_v)}{\mathfrak{a}}\Big)\frac{\mathfrak{a}- \min(\mathfrak{a},\,\mathfrak{b})}{\mathfrak{a}-1}
	\end{equation}
\end{prp}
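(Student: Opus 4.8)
The plan is to derive the variance formula by making precise the hypergeometric interpretation sketched just before the proposition. Without loss of generality I would assume $\mathfrak{a} \le \mathfrak{b}$, so that $\min(\mathfrak{a},\,\mathfrak{b}) = \mathfrak{a}$; in that degenerate case every patient of $C_A$ is matched in every maximal matching, $F_M(C_A)_k$ is constant, and the claimed formula correctly collapses to $0$ because of the factor $\mathfrak{a} - \min(\mathfrak{a},\,\mathfrak{b}) = 0$. The substantive case is $\mathfrak{b} < \mathfrak{a}$, so $\min(\mathfrak{a},\,\mathfrak{b}) = \mathfrak{b}$, and in each maximal matching exactly $\mathfrak{b}$ of the $\mathfrak{a}$ patients of $C_A$ are selected, each subset of size $\mathfrak{b}$ being equally likely by the uniformity assumption (Definition~\ref{def:uniform_sequence}). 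Setting $K \define \sum_{v=1}^{\mathfrak{a}} \mathfrak{o}(x_v)$, the number of selected patients of $C_A$ whose observed value equals $1$ is then a hypergeometric random variable $X$ with population size $\mathfrak{a}$, number of successes $K$, and sample size $\mathfrak{b}$.

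The key computation is to push the standard hypergeometric variance through the normalization. The relative matching frequency in a single realization is $F_M(C_A)_k = X_k / \mathfrak{a}$, so $\mathrm{Var}(C_A) = \mathrm{Var}(X)/\mathfrak{a}^2$. Plugging in the textbook formula $\mathrm{Var}(X) = \mathfrak{b}\,\frac{K}{\mathfrak{a}}\,\frac{\mathfrak{a}-K}{\mathfrak{a}}\,\frac{\mathfrak{a}-\mathfrak{b}}{\mathfrak{a}-1}$ gives
\begin{equation*}
\mathrm{Var}(C_A) = \frac{1}{\mathfrak{a}^2}\cdot\mathfrak{b}\cdot\frac{K}{\mathfrak{a}}\cdot\frac{\mathfrak{a}-K}{\mathfrak{a}}\cdot\frac{\mathfrak{a}-\mathfrak{b}}{\mathfrak{a}-1} = \frac{\mathfrak{b}}{\mathfrak{a}^2}\cdot\frac{K}{\mathfrak{a}}\cdot\Big(1-\frac{K}{\mathfrak{a}}\Big)\cdot\frac{\mathfrak{a}-\mathfrak{b}}{\mathfrak{a}-1}.
\end{equation*}
Now I would recognize $\frac{\mathfrak{b}}{\mathfrak{a}^2}\,\frac{K}{\mathfrak{a}} \cdot \mathfrak{a} = \frac{\mathfrak{b}}{\mathfrak{a}^2}K$; more cleanly, from Proposition~\ref{prop:mean_single} we have $\mathbb{E}(C_A) = \frac{\min(\mathfrak{a},\,\mathfrak{b})}{\mathfrak{a}^2}\sum_{v=1}^{\mathfrak{a}}\mathfrak{o}(x_v) = \frac{\mathfrak{b}}{\mathfrak{a}^2}K$ in this case, so the first two factors above combine to exactly $\mathbb{E}(C_A)$, leaving $\mathrm{Var}(C_A) = \mathbb{E}(C_A)\big(1 - \tfrac{K}{\mathfrak{a}}\big)\frac{\mathfrak{a}-\mathfrak{b}}{\mathfrak{a}-1}$, which is the claimed identity with $\min(\mathfrak{a},\,\mathfrak{b}) = \mathfrak{b}$ reinserted. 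Finally I would check that the two cases merge into the single stated formula, noting that when $\mathfrak{a} \le \mathfrak{b}$ one has $\mathbb{E}(C_A) = \tfrac{1}{\mathfrak{a}}K$ and the numerator $\mathfrak{a} - \min(\mathfrak{a},\,\mathfrak{b}) = 0$ kills everything, consistent with the deterministic behaviour noted above.

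The main obstacle is not the algebra but justifying the hypergeometric model rigorously: one must argue that, under a uniform sequence of matchings, the identity of the $\min(\mathfrak{a},\,\mathfrak{b})$ selected patients of $C_A$ is distributed as a uniformly random size-$\min(\mathfrak{a},\,\mathfrak{b})$ subset of $C_A$, and that "variance of matchings for $C_A$" is to be read as the variance of the single-realization random variable $F_M(C_A)_k$ (equivalently, by the law of large numbers, the asymptotic sample variance of the sequence $F_M(C_A)_1, F_M(C_A)_2,\dots$), matching the interpretation used for the expectation in Propositions~\ref{prop:mean_single} and~\ref{prop:mean_total}. Once that identification is granted — it follows from the symmetry of Definition~\ref{def:uniform_sequence} exactly as the expectation computation did — the result is the classical hypergeometric variance rescaled by $1/\mathfrak{a}^2$, and the only care needed is bookkeeping the $\max/\min$ swap so the formula is stated symmetrically.
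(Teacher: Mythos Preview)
Your approach is the same as the paper's: model the number of matched patients of $C_A$ with observed value $1$ as a hypergeometric random variable $X$ (population $\mathfrak{a}$, successes $K=\sum_v \mathfrak{o}(x_v)$, sample size $\min(\mathfrak{a},\mathfrak{b})$) and read off the textbook variance. The paper's own proof is in fact terser than yours --- it simply names the success probability $K/\mathfrak{a}$, recalls $\mathbb{E}(C_A)$ from Proposition~\ref{prop:mean_single}, and invokes the hypergeometric variance formula without the explicit case split or rescaling that you carry out.

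There is, however, an arithmetic slip in your final identification. From
\[
\frac{\mathrm{Var}(X)}{\mathfrak{a}^2}=\frac{\mathfrak{b}}{\mathfrak{a}^2}\cdot\frac{K}{\mathfrak{a}}\cdot\Big(1-\frac{K}{\mathfrak{a}}\Big)\cdot\frac{\mathfrak{a}-\mathfrak{b}}{\mathfrak{a}-1}
\]
you assert that ``the first two factors combine to exactly $\mathbb{E}(C_A)$''. They do not: $\tfrac{\mathfrak{b}}{\mathfrak{a}^2}\cdot\tfrac{K}{\mathfrak{a}}=\tfrac{\mathfrak{b}K}{\mathfrak{a}^3}=\mathbb{E}(C_A)/\mathfrak{a}$, so what you have actually derived is $\tfrac{1}{\mathfrak{a}}\,\mathbb{E}(C_A)\big(1-\tfrac{K}{\mathfrak{a}}\big)\tfrac{\mathfrak{a}-\mathfrak{b}}{\mathfrak{a}-1}$, which is off from~\eqref{eq:single_variance} by a factor of $\mathfrak{a}$. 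Equivalently, the formula in~\eqref{eq:single_variance} equals $\mathrm{Var}(X)/\mathfrak{a}$, not $\mathrm{Var}(X)/\mathfrak{a}^2$, so interpreting $\mathrm{Var}(C_A)$ as the variance of $F_M(C_A)_k=X/\mathfrak{a}$ does not reproduce the stated result. The paper effectively substitutes the hypergeometric mean $np=\min(\mathfrak{a},\mathfrak{b})\,K/\mathfrak{a}$ by $\mathfrak{a}\,\mathbb{E}(C_A)$ and then divides once by $\mathfrak{a}$; your hypergeometric reasoning is correct, but you need to track that single normalization factor rather than absorb it silently.
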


\begin{proof}
	Viewing one realization of a cluster matching as the realization of a hypergeometrically distributed random variable yields the probability of picking one patient with $\mathfrak{o}(x_v) \equiv 1$ as
	$\frac{\sum_{v=1}^{\mathfrak{a}}\mathfrak{o}_(x_v)}{\mathfrak{a}}$. From Proposition~\ref{prop:mean_single} it is know that
	$\mathbb{E}(C_A) = \frac{\min(\mathfrak{a},\,\mathfrak{b})}{ \mathfrak{a}^2} \sum_{v=1}^{\mathfrak{a}}\mathfrak{o}(x_v)$. Thus using the formula for the variance of
	hypergeometric distributions yields equation \eqref{eq:single_variance}.
\end{proof}

As exact matchings of different clusters are independent from each other, the variance
for a matching over a therapy group follows immediately from the previous Proposition~\ref{prop:variance}.

\begin{cor}
	\label{cor:total_variance}
	Let $n$ be the number of exact matching clusters in $A$ and $B$ and let $C_{A,j}$ and $C_{B,j}$ with $1\le j \le n$ be exact matching clusters of $A$ and $B$
	with $\vert C_{A,\,j} \vert = \mathfrak{a}_j$ and $\vert C_{B,\,j} \vert = \mathfrak{b}_j$, respectively. Let $M$ be the uniform sequence of maximal exact
	matchings between all clusters.	Then the variance of therapy group $A$ is
	\begin{equation}
	\label{eq:total_variance_A}
	\mathrm{Var}(A) = \sum_{j=1}^{n}\mathrm{Var}(C_{A,\,j}).
	\end{equation}
	For therapy group $B$ equation~\ref{eq:total_variance_A} holds similarly.
\end{cor}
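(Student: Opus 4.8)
The plan is to reduce the statement to Proposition~\ref{prop:variance} via the independence of matchings between distinct cluster pairs, exactly as the companion results on expectancy (Proposition~\ref{prop:mean_total}) were reduced to Proposition~\ref{prop:mean_single}. First I would set up the relevant random variables: for each $1 \le j \le n$, let $Y_j$ denote the (normalized) relative matching frequency contributed by the cluster pair $(C_{A,j}, C_{B,j})$ in a single draw from the uniform sequence $M$, so that the total relative matching frequency for therapy group $A$ is $Y = \sum_{j=1}^{n} Y_j$. Proposition~\ref{prop:variance} already identifies $\mathrm{Var}(Y_j) = \mathrm{Var}(C_{A,j})$ as the variance of the hypergeometric sample associated with the $j$-th pair, projected onto $[0,1]$ and normalized as in the previous subsection.

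Second, I would invoke the independence assertion made in the text preceding the corollary — ``exact matchings of different clusters are independent from each other'' — which is justified because a maximal matching between $C_{A,j}$ and $C_{B,j}$ selects patients only from within those two clusters, and by Proposition~\ref{prop:cluster} every patient lies in exactly one cluster, so the draws for different $j$ involve disjoint patient pools and are performed independently in the uniform experiment. Given independence, the variance of the sum is the sum of the variances:
\begin{equation*}
\mathrm{Var}(A) = \mathrm{Var}\Big(\sum_{j=1}^{n} Y_j\Big) = \sum_{j=1}^{n} \mathrm{Var}(Y_j) = \sum_{j=1}^{n} \mathrm{Var}(C_{A,j}),
\end{equation*}
which is precisely equation~\eqref{eq:total_variance_A}. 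The argument for therapy group $B$ is verbatim the same with $C_{A,j}$ replaced by $C_{B,j}$ and $\mathfrak{a}_j$ by $\mathfrak{b}_j$, using the second display in Proposition~\ref{prop:variance}'s analogue for $C_B$.

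The only genuinely delicate point — and the step I would be most careful about — is making the independence claim airtight. Strictly speaking one should note that the uniform sequence of matchings over \emph{all} clusters (Definition~\ref{def:uniform_sequence}, extended as in Proposition~\ref{prop:mean_total}) factorizes as a product measure over the per-pair matching experiments, so that $Y_1, \dots, Y_n$ are jointly independent and not merely pairwise independent; this factorization is exactly the observation ``a single realization of maximal matchings between clusters is independent of maximal matchings between other clusters'' already used in the proof of Proposition~\ref{prop:mean_total}, so I would simply cite it rather than re-derive it. Everything else is the textbook identity $\mathrm{Var}(\sum X_j) = \sum \mathrm{Var}(X_j)$ for independent summands, so the proof is short: one sentence recalling Proposition~\ref{prop:variance}, one sentence invoking independence, one line of display, and a remark that the $B$-case is identical.
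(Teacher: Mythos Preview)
Your proposal is correct and matches the paper's approach exactly: the paper treats this corollary as immediate from Proposition~\ref{prop:variance} together with the independence of matchings across distinct cluster pairs, which is precisely the reduction you spell out. Your extra care in distinguishing joint from pairwise independence and in citing the factorization already used in Proposition~\ref{prop:mean_total} only makes the argument more explicit than the paper's one-line justification.
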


Note that all values used in Proposition~\ref{prop:variance} and Corollary~\ref{cor:total_variance} 
are available after matching with the \hyperref[alg:DBSeM]{DeM algorithm} and that the clusters are matched such that
no additional factor is introduced into the variance. 
Additionally note that the variance of a cluster for which
all patients are matched, i.e. for $C_A$ with $\min(\mathfrak{a},\,\mathfrak{b}) = \mathfrak{a}$, is $0$. The same holds for
clusters for which all patients have the same observed result, as then 
either $F_w(C_A) = 0$ or  {$\Big(1-\frac{\sum_{v=1}^{\min(\mathfrak{a},\,\mathfrak{b})}\mathfrak{o}(x_v)}{\mathfrak{a}}\Big) = 0$}. Thus at least half of all matched clusters from $A$ and $B$ fulfil either the $\min(\mathfrak{a},\,\mathfrak{b}) = \mathfrak{a}$ or $\min(\mathfrak{a},\,\mathfrak{b}) = \mathfrak{b}$ condition, and therefore have a variance of $0$ in the matching calculated by the DeM algorithm.

The second property  we discuss relates to the calculation of clusters and the initial matching procedure. 
\cite{RubinRosenbaum1983} defined the balancing score $b(p)$ of a patient $p$ as a value assignment 
such that the conditional distribution of $cv(p)$ is the same for patients $p$ from both treatment
groups, $A$ and $B$. They have shown that $cv(p)$ is the finest balancing score (\citep{RubinRosenbaum1983}, section $2$) and 
that if
treatment assignment is strongly ignorable, then the difference between the two respective treatments 
is an unbiased estimate of the average treatment effect at that balancing score value (\citep{RubinRosenbaum1983},
theorem $3$). Since we use $cv(p)$ in our calculations, the 
result calculated by Algorithm~\ref{alg:DBSeM} is an unbiased estimate of the average
treatment effect, if the strong ignorability assumption holds,
additionally to the properties proven previously.

\section{Numerical example}
\label{sec:num_example}

\subsection{Description of dataset and setup}
\label{subsec:setup}

We use an official complete survey
to illustrate the effect of ignoring different possible matchings as well as the results of the proposed DeM algorithm.  

The dataset used is the quality assurance dataset of isolated aortic valve procedures in $2013$, which is an official mandatory dataset including all
aortic valve surgery cases in German hospitals. It contains patient information (covariates) and mortality information (observed results) for $17,427$ patients. For each patient the corresponding record contains $s=19$ covariate variables. The $17,427$ patients are divided into two therapy groups. $9,848$ SAVR cases (replacement surgery of aortic valves) and $7,579$ TF-AVI cases (transcatheter/transfemoral implantation of aortic valves). The cases were documented in accordance with \S 137 Social Security Code V (SGB V) by hospitals registered under \S 108 SGB V. The data collection is compulsory for all in-patient isolated aortic valve procedures in German hospitals. The dataset is held by the Federal Joint Committee (Germany) and freely accessible for researchers after application. Given this dataset, it can be safely assumed that the data is independent in a statistical sense as patients were only recorded once.

We proceed to compare the proposed DeM algorithm with two other approaches: the de-facto standard for statistical matching, the $1$:$1$ propensity score matching (PSM), as well as a bootstrapped variant of $1$:$1$ PSM by~\cite{Austin2014}. 
For the regression based PSM algorithms, relevant regression variables and their values have to be determined. For our example, we consider the $H_0$-hypothesis: \textit{The mortality-rate does not depend on therapy}, for which the relevant variables are internationally validated in the Euroscore II (\url{http://www.euroscore.org}). The corresponding regression values for this setting are taken from the quality assurance dataset of isolated aortic valve procedures. PSM itself was then calculated using functions provided by IBM SPSS Statistics for Windows, Version $24.0$.

The decision to use $1$:$1$ PSM for comparison was made as it has the highest amount of possible matchings for fixed match-sizes and is the most commonly used~\citep{Stuart2010}. Furthermore all possible $\mathfrak{v}:\mathfrak{w}$ matchings, for arbitrary $\mathfrak{v},\,\mathfrak{w} \in \mathbb{N}$ are included in the set of possible $1$:$1$ matchings, while the reverse is obviously not true for arbitrary $\mathfrak{v},\,\mathfrak{w} \in \mathbb{N}$ and any given dataset.

We additionally note that a match of two patients with $\delta >0$ in this dataset would imply a difference of at least $5\%$ between patients in regards to covariates as there are only $19$ covariate variables. 

We explicitly stress that the purpose of this section is not the recommendation of any kind of treatment, but the illustration of the usage of results presented in Subsections~\ref{subsec:prep}--\ref{subsec:properties}.

\subsection{Computational results}
\label{subsec:results}

We computed the exact $1$:$1$ matchings with PSM and the proposed \hyperref[alg:DBSeM]{DeM algorithm}. For comparison 
purposes we present seven realizations of the non-deterministic PSM \text{(Set 1 -- 7)}.
Out of the $9,848$ SAVR patients $3,361$ had at least one exact TF-AVI match, while out of the $7,579$ TF-AVI patients, $2,249$ patients had at least one exact SAVR match. Thus one third of all patients could be exactly matched. As stated in the previous section, the null hypothesis for calculation of 
the p-values was $H_0:$ \textit{The mortality-rate does not depend on therapy}.
The results of some maximum matchings and the differences between them are indicated in Table~\ref{tab:results}. 

\begingroup
\renewcommand*{\thefootnote}{\alph{footnote}}
\begin{table}
	%	\captionsetup{justification=centering}
	\caption{Results for maximal matchings}
	\label{tab:results}
	\begin{tabularx}{\linewidth}{l|ll|ll|l}
		$1{,}502$ exact matchings with & \multicolumn{2}{|c|}{SAVR} & \multicolumn{2}{|c|}{TF-AVI} &$\chi^2$ Test \\
		regards to all $19$ Euroscore II& \multicolumn{2}{|c|}{in-hospital death} & \multicolumn{2}{|c|}{ in-hospital death} & (2-tailed)\\
		variables and without replacement& count & \% & count & \% & p-value \\
		\label{table:set_1}PSM Set 1 & $73$ & $4.9\%$ & $33$ & $2.2\%$ & $<0.0001$\\
		\label{table:set_2}PSM Set 2 & $73$ & $4.9\%$ & $34$ & $2.3\%$ & $<0.0001$\\
		\label{table:set_3}PSM Set 3 & $42$ & $2.8\%$ & $32$ & $2.1\%$ & $0.2398$\\
		\label{table:set_4}PSM Set 4 & $24$ & $1.6\%$ & $15$ & $1.0\%$ & $0.1470$\\
		\label{table:set_5}PSM Set 5 & $73$ & $4.9\%$ & $50$ & $3.3\%$ & $0.0342$\\
		\label{table:set_6}PSM Set 6 & $24$ & $1.6\%$ & $50$ & $3.3\%$ & $0.0021$ \\ 
		\label{table:set_7}PSM Set 7 & $73$ & $4.9\%$ & $15$ & $1.0\%$ & $0.0001$ \\
		\label{table:bootstrap}Uniform Bootstrapping ($10{,}000$ samples)& $52.47$ & $3.49\%$ & $32.10$ & $2.14\%$ & $0.0210$ (t-test)\footnotemark\\
		
		\label{table:DBSeM}\hyperref[alg:DBSeM]{\textbf{DeM}} & $\textbf{53.01}$ & $\textbf{3.5\%}$& $\textbf{32.32}$ & $\textbf{2.1\%}$ &  $\textbf{0.0227}$\\
	\end{tabularx}
\end{table}
\footnotetext{The t-test values for all sets without replacement are $<0.0001$, with replacement $0.0005$.}	
\endgroup

A maximal matching $1$:$1$ comprises $1,502$ matching pairs of patients, meaning that 
$3,004$ patients were matched during any exact non-cluster matching. 
We only considered maximal exact matches, therefore every shown matching matches the maximal 
number of patients possible and matches two patients if and only if their covariates are equal,
meaning that the standardized differences in all presented sets is $0$. 
Still the large discrepancy between the observed results in the shown sets 
immediately indicates that many maximal matchings exist, as 
shown in Proposition~\ref{prop:group_number}.
Calculating all possible maximal matchings would be a futile endeavour and
not necessary if the observed results between different maximal matchings would not vary. 
Unfortunately observed results can vary to a very high degree, as can be seen in Table~\ref{tab:results}. 
They vary in such a way that one could even draw different conclusions based on the matching one 
calculated, see sets~\hyperref[table:set_1]{$1$},~\hyperref[table:set_6]{$6$} and~\hyperref[table:set_7]{$7$}, 
while arguing that the calculated p-value is below a threshold of $1\%$. 
For other sets one can see that they are on either side of the spectrum, favouring one, the other, or no therapy.
Even the \hyperref[table:bootstrap]{bootstrapping result} for $10,000$ samples did not exhaust all possible maximal
matchings and is only similar to the \hyperref[table:DBSeM]{result of DeM}, which gives as a result the
expectancy of all possible maximal exact matches in the given dataset, see Theorem~\ref{thm:maximal_matches}. 

As can be seen from the results, given one dataset and a non-deterministic method 
one could obtain different results even when regression variables are given, which leads to
uncertainty in the evaluation process as fellow researchers cannot reconstruct results obtained 
through statistical matching based on regression methods. The proposed algorithm tries to 
resolve this issue for exact matches. Even though this is a limitation in applicability, exact
cluster matches obtained through the \hyperref[alg:DBSeM]{DEM algorithm} can be used at the
core of a matching, ascertaining that at least the exact matchable contingent of a dataset
is matched deterministically (Theorem~\ref{thm:runtime}) and corresponds to the expectancy 
value of the exact matches (Theorem~\ref{thm:maximal_matches}).
Additionally, if in large datasets no exact matches can be found, 
researchers should thoroughly investigate for systematic differences in the therapy groups,
as comparisons of therapy-effects are not recommended if systematic differences exist.

\section{Conclusion}
\label{sec:conclusion}

We proposed an alternative deterministic exact matching method (\hyperref[alg:DBSeM]{DeM}) for SM
in the exact case. The proposed method is based
on matching clusters of patients from therapy groups instead of matching patients to patients directly. 
The presented cluster matching approach computed with the \hyperref[alg:DBSeM]{DeM algorithm} (Algorithm~\ref{alg:DBSeM})
extracts all possible information from a given dataset as all possibly 
matchable patients get matched and the constructed matching is in 
accordance with the expectancy value of the dataset (Theorem~\ref{thm:maximal_matches}). 
The constructed matching also has the desirable property of having low variance 
while being in accordance to the expectancy of all possible maximal exact matchings in the dataset 
(Proposition~\ref{prop:variance} and Corollary~\ref{cor:total_variance}).

As the proposed algorithm is deterministic and fast (Theorem~\ref{thm:runtime}) as
well as easy to implement, it can be used to produce exact matchings on 
datasets and to discuss findings in a reliable way as the results are 
easily reproducible. This is 
an important property as it makes a subsequent decision-making process more transparent 
and not susceptible to random events, such as 
random draws not in accordance to the expectancy.
Thus discussions about conclusions drawn can be done based on the dataset and the method used 
for data acquisition as 
uncertainties regarding the matching method are eliminated through a proven guarantee
that there are no additional errors introduced by the matching
procedure.

The results from the numerical example, calculated on a dataset 
containing a complete survey, validate the shown theoretical propositions and theorems. The 
proposed method can furthermore be seen as an extension of state 
of the art methods as results obtained through their usage would converge in the limit  
against the result calculated through the proposed algorithm. 

The exclusive applicability of the proposed algorithm to exact matches might be seen as a limitation. 
Then again for small datasets, which are statistically more prone to high variance in regards 
to two different matchings, the proposed algorithm provides a reliable result for the exact matches.
For the case of large datasets, a practitioner should be wary if only few exact matches exist 
or a matching result varies to a high degree from the result given by the proposed deterministic 
algorithm as a systematic difference between the two compared therapy groups might exist or measuring
inaccuracies for continuous covariates might be too large in the given dataset to draw reliable conclusions.

Finally, we highlight that the algorithm can be used 
as an a priori method for another matching method to extract all available information contained in 
exact matchings, therefore ascertaining that at least the exact matchable 
patients of both therapy groups are matched deterministically and their 
information is completely used. Further research will be dedicated to extend 
the presented model towards $\delta$-matching for $\delta > 0$ while keeping the
desirable properties presented in this paper and therefore extend the 
applicability of the proposed method.

\bibliographystyle{chicago}
\bibliography{biblio}

\end{document}